\begin{document}

\begin{center}
    \Large
    \textbf{Stability and Machine Learning Applications of Persistent Homology Using the Delaunay-Rips Complex}

    
    \large
    \vspace{0.4cm}
    Amish Mishra and Francis C. Motta
       
    \vspace{0.9cm}
    \textbf{Abstract}
\end{center}

In this paper we define, implement, and investigate a simplicial complex construction for computing persistent homology of Euclidean point cloud data, which we call the Delaunay-Rips complex (DR). Assigning the Vietoris-Rips weights to simplices, DR experiences speed-up in the persistence calculations by only considering simplices that appear in the Delaunay triangulation of the point cloud. We document and compare a Python implementation of DR with other simplicial complex constructions for generating persistence diagrams. By imposing sufficient conditions on point cloud data, we are able to theoretically justify the stability of the persistence diagrams produced using DR. When the Delaunay triangulation of the point cloud changes under perturbations of the points, we prove that DR-produced persistence diagrams exhibit instability. Since we cannot guarantee that real-world data will satisfy our stability conditions, we demonstrate the practical robustness of DR for persistent homology in comparison with other simplicial complexes in machine learning applications. We find in our experiments that using DR for an ML-TDA pipeline performs comparatively well as using other simplicial complex constructions.


\tableofcontents

\section{Introduction}

As the volume and complexity of data collected in the experimental sciences and in industry applications continues to grow, scientists and engineers often turn to methods that transform data into more compact and manageable representations, while retaining crucial characteristics that enable meaningful data analysis, visualization, and support the development of high-performing predictive models. Tools from the applied computational topology subfield Topological Data Analysis (TDA) offer several solutions to address some of the challenges with processing and analyzing complex or high-dimensional data. TDA leverages results from algebraic topology to measure and quantify qualitative, shape-based features of data and continues to attract the interest of researchers across computational, mathematical, and experimental disciplines including financial networks \cite{leibon2008topological}, signals in images \cite{chung2009persistence}, cancer detection \cite{QAISER2016119}, forensics \cite{10.1007/978-3-319-64185-0_11}, and material science \cite{kramar2014quantifying}.

Persistent homology (PH) is the flagship method within TDA and provides a robust method to encode multi-dimensional geometric/topological features of a dataset in a compact representation known as a persistence diagram (PD). PH is often computed from point cloud data, i.e., a finite collection of points in a metric space. One begins by associating to the cloud a nested family of topological spaces---namely a collection of simplicial complexes consisting of simplices of various dimension (e.g., points, edges, triangles, tetrahedra, etc.)---that is parameterized by a (real) scale parameter. This filtration of simplicial complexes is meant to capture geometric/topological structures in the point cloud across scales. From the filtration, a PD, consisting of a collection of ordered pairs above the diagonal in $\mathbb{R}^2$. Each (birth, death)-pair represents a topological feature of a fixed dimension (connected component, hole, void, etc.) which appears at the birth scale and disappears at the death scale. 

Construction of a filtration depends on two factors: which simplices to include in the family of complexes, and determination of the scales at which each simplex should appear. Both factors contribute to the overall computational burden to compute PH on point cloud data, and may represent barriers to deploying PH on large or high-dimensional data sets. For instance, an appealing and often used construction known as the Rips filtration requires only knowledge of the pairwise dissimilarities between points. This makes computing the scales at which all simplices should appear quite straightforward. However, the size of the Rips complex grows exponentially in the number of points. An alternative to the Rips filtration is the Alpha complex filtration \cite{Edelsbrunner1993TheUO}, which restricts the complexes to simplices in the Delaunay triangulation of the cloud. However, this method assigns scales to simplices according to the neighborship of the Voronoi cells associated with each data point, which can be an expensive computation. Some other solutions \cite{sheehy2012linear, Guibas2007ReconstructionUW, Silva2004TopologicalEU} have more recently been proposed to reduce the size of the complexes ingested by the PH algorithm, which has been implemented in a variety of languages (C++, Python, Julia, etc.) and optimized to further improve computational efficiencies \cite{bauer2021ripser, ctralie2018ripser, Čufar2020}. 

In this work, we take inspiration and combine the computational efficiencies of the Rips and Alpha filtrations to define a new filtration on Euclidean point cloud data. This construction, which we refer to as the Delaunay-Rips (DR) complex, may be viewed as a special case of the lazy-witness complex \cite{Silva2004TopologicalEU}. Instead of using the weak witness complex, we use the strong witness complex that is the Delaunay triangulation \cite{de2008weak}. We also establish that, generically, this filtration will inherit previously established stability results which guarantee that, roughly, certain small changes in the underlying data will result in small changes in the PD \cite{Chazal_gromov_stabililty, CohenSteiner2005StabilityOP}. That said, we further prove that, for certain configurations of points, the DR filtration suffers from an instability in which an arbitrarily small perturbation of the underlying data can result in a non-infinitesimal change in the PD. This instability is due to the instability in the Delaunay triangulation itself at degenerate configurations of points.

Much of the fundamental appeal of PH technologies is due to the numerous stability results that show the transformations sending data to diagrams and/or diagrams to vectorized topological representations are Lipschitz continuous transformations \cite{Skraba2020WassersteinSF}. However, it is often the case in practice that the topological representations derived from PDs---and used in data analysis and statistical model development---are not stable, despite the purported success of the models on which they’re based. Thus, we are further motivated to ask, to what extent does the instability observed in the DR filtration matter in practice? We interrogate this question empirically by performing systematic comparisons of the performance and costs of the Alpha, Rips, and DR filtrations on synthetic and real datasets and machine learning (ML) modelling tasks. We find the effect of the instability is negligible in these cases while the practical computational advantages enjoyed by DR can be significant. Thus, the contributions of this paper are:
\begin{itemize}
    \item We define the DR complex, which is a refinement of the Vietoris-Rips simplices on the data using the Delaunay triangulation as a backbone.
    \item We offer an algorithm and a Python implementation of the DR filtration.
    \item We document some empirical runtime comparisons for constructing the PD using DR with popular implementations of the Rips \cite{ctralie2018ripser} and Alpha \cite{cechmate} filtrations.
    \item We provide a straightforward proof of the stability of the Delaunay triangulation in some neighborhood of a generic point cloud, which establishes stability of the DR filtration on an appropriately chosen neighborhood of the data.
    \item We provide a rare proof of an instability in a filtration used for computing PH.
    \item We examine the practical impact of the instability on synthetic and real-world classification tasks and find the instability has limited impact on model performance. 
\end{itemize}

This paper is organized as follows: In Section \ref{sec:background} we briefly review the necessary mathematical preliminaries. Section \ref{sec:delrips} formally introduces the DR filtration, provides psuedo-code of the implementation we used to compute the DR filtration on point cloud data, and compares the empirical runtime of this algorithm to implementations of Rips and Alpha constructions across increasing point cloud size and dimension. In Section \ref{sec:stability} we discuss the stability properties of the DR filtration and demonstrate---through a by-hand calculation of a family of PDs---how a discontinuity in the transformation from point cloud to PD can arise given a perturbation in the underlying cloud. In Section \ref{sec:ml-comparisons} we report the result of several systematic comparisons of ML model performance trained on topological features derived from Alpha, Rips, and DR flirtations on synthetic and experimental data, including using random forest classifiers trained on persistence image (PI) vectorizations \cite{adams2017persistence} of persistence diagrams, and support vector machines trained on persistence statistics feature vectors derived from EKG time-series data \cite{Chung_frontiers_hr_ml}.

\section{Background}
\label{sec:background}
To extract topological features from a point cloud, one often begins by treating the points as the \textit{vertices} of a so-called \textit{simplicial complex} that consists of vertices, edges (pairs of vertices), triangles (sets of three vertices), tetrahedra (sets of four vertices), and higher-dimensional analogues (sets of $n > 4$ vertices), to construct an object with well-defined notions of ``shape.'' Our interest is in computing algebraic objects, namely persistent homology groups, that characterize topological (homological) invariants (e.g., connected components, holes, voids, etc.) across scales. We briefly review these notions here. For a deeper treatment of the mathematical foundations of these sections, we refer the reader to \cite{Edelsbrunner, hatcher2002algebraic}. For a more computational treatment with many practical considerations and examples, we recommend \cite{Roadmap}.

\subsection{Simplicial Homology}
\begin{defn}
    Let $K_0$ be a finite set and $\mathcal{P}(K_0)$ the powerset (i.e., the set of all subsets) of $K_0$. An \textbf{abstract simplicial complex built on $K_0$} is a collection, $K \subset \mathcal{P}(K_0)$, of non-empty subsets of $K_0$ with the properties that $\{v\} \in K$ for all $v \in K_0$, and if $\sigma \in K$ then $\tau \in K$ for all $\tau \subseteq \sigma$. 
\end{defn}
    In this paper we are concerned with simplicial complexes built from point clouds such that the points are identified with the singleton sets in the complex. In general, the elements of a simplicial complex, $K$, will be called \textbf{simplices}, while we may refer to sufficiently small subsets by other names. For example, we will refer to singleton sets as \textbf{vertices} of $K$, size-two sets as \textbf{edges}, etc. We say that a simplex has \textbf{dimension} $p$ or is a \textbf{$p$-simplex} if it has size $p+1$; so vertices are dimension 0 simplices, edges are dimension 1, triangles are 2-simplices, etc. We denote a $p$-simplex by $[v_0 v_1\ldots v_p]$ if it contains the 0-simplices $v_i, i=0,\ldots p$. $K_p$ denotes the collection of all $p$-simplices and the \textbf{$k$-skeleton} of $K$ is the union of the sets $K_p$ for all $p \in \{0,1,\dots,k\}$. If $\tau$ and $\sigma$ are simplices such that $\tau \subset \sigma$, then we call $\tau$ a \textbf{face} of $\sigma$. We say that $\tau$ is a face of $\sigma$ of \textbf{codimension} $q$ if the dimensions of $\tau$ and $\sigma$ differ by $q$. The \textbf{dimension} of $K$ is defined as the maximum of the dimensions of any of its simplices.
\begin{defn}
    Let $K$ be a simplicial complex and $p \geq 0$. A \textbf{$p$-chain} of $K$ is a formal sum of $p$-simplices in $K$ written as
    \[c = \sum a_i \sigma_i\]
    where $a_i \in \mathbb{F}$ is a field, and $\sigma_i$ are $p$-simplices. The $p$-chains then form a \textbf{vector space of $p$-chains} over a field $\mathbb{F}$, which we denoted $C_p(K)$.
\end{defn}

\noindent $C_p(K)$ and $C_{p-1}(K)$ are naturally related by a linear map called the \textbf{boundary map} that sends each $p$-chain to its boundary $(p-1)$-chain.
\begin{defn}
    Let $K$ be an $n$-dimensional simplicial complex and $\sigma = [u_0u_1\dots u_p]\in C_p(K)$. The $p$-\textbf{boundary map}, $\partial_p$ is defined on $p$-simplices to be
    \[\partial_p\sigma = \sum_{j=0}^{p}(-1)^j[u_0\dots\hat{u}_j\dots u_p]\]
    where the hat indicates that $u_j$ is omitted. Extending via linearity, $\partial_p:C_p(K) \to C_{p-1}(K)$ is further defined on any $p$-chain, $c = \sum_i {a_i}\sigma_i$, by
    \[\partial_p c = \sum_i a_i\partial_p\sigma_i.\]
\end{defn}
   \noindent Intuitively the chain of codimension-1 faces of a simplex encode the boundary of that simplex. Let's take a look at some examples. The boundary of the triangle $[abc]$ is the chain $[bc] - [ac] + [ab]$, consisting of the three edges that form its boundary. Extending to chains, the boundary of the two edges that meet at the vertex $b$ (namely $[ab]+[bc]$) would be $[b]- [a] + [c] - [b] = [c]-[a]$, reflecting the fact that the boundary of this 1-dimensional object consists only of its two terminal vertices. Connecting the chain groups via their boundary maps forms a \textbf{chain complex}
    \[\bm{0} \rightarrow C_n(K) \xrightarrow{\partial{n}} \dots C_{p+1}(K)\xrightarrow{\partial{p+1}} C_{p}(K)\xrightarrow{\partial{p}} C_{p-1}\xrightarrow{\partial{p-1}}\dots C_0(K) \rightarrow \bm{0},\]
    \noindent where the first map is the trivial linear map that sends the 0-vector in the trivial vector space, $\bm{0}$, to the 0-vector in $C_n(K)$. 
    
    Informally, a $p$-dimensional hole in a simplicial complex will be represented by a $(p-1)$-chain that could be (but isn't) the boundary of a $p$-chain. For example, as we have seen the boundryless 1-chain $[bc] + [ac] + [ab]$ is the boundary of a 2-chain (the 2-simplex $[abc]$). However, if $[abc]$ were not in the simplicial complex, we would be justified in saying the complex contained a hole enclosed by the edges of the missing triangle. To make these notions precise we introduce two subspaces of $C_p(K)$ that respectively encode all the $p$-chains that are without boundary, and those which are actually boundaries of $p+1$ chains. 
    
\begin{defn}
    A \textbf{$p$-cycle}, $\gamma \in C_p(K)$, is a $p$-chain with no boundary, i.e., $\partial_p \gamma = 0$. The collection of all $p$-cycles in a simplicial complex $K$ is denoted $Z_p(K)$ and forms a subspace of $C_p(k)$ because $Z_p(K)=\ker\partial_p.$
\end{defn}

\begin{defn}
    A \textbf{$p$-boundary}, $\beta \in C_p(K)$, is a $p$-chain that is the boundary of a $(p+1)$-chain, i.e., $\partial_{p+1} \sigma = \beta$ for some $\sigma \in C_{p+1}(K)$. The collection of $p$-boundaries in the simplicial complex $K$, denoted $B_p(K)$, is also a subspace of $C_p(K)$ since $B_p = \text{im}\ \partial_{p+1}.$
\end{defn}
With $p$-cycles and $p$-boundaries defined, we can formally define the object which captures representatives of holes in a simplicial complex: those cycles which are not boundaries. 
\begin{defn}
    The \textbf{$p$-th homology group} is
    \(H_p = Z_p(K)/B_p(K) = \ker\partial_p / \text{im}\ \partial_{p+1}.\)
    The \textbf{$p$-th Betti number} is the dimension of the quotient vector space, $\beta_p = \text{rank}\ H_p$.
\end{defn}
$H_p$ is then the vector space of equivalence classes of $p$-cycles, where two $p$-cycles are equivalent if they differ by a $p-$boundary. The fact that $H_p$ is well-defined relies on the fundamental result that ensures the boundary of a boundary chain must be empty, so that $B_p$ is actually a subspace of $Z_p$.
\begin{lem}
    (Fundamental lemma of homology \cite{Edelsbrunner}) The composition of any two consecutive boundary maps in the chain complex is trivial. That is,
    \[\partial_p \partial_{p+1} (\sigma) = 0\]
    for every integer $p$ and every $(p+1)$-chain $\sigma$.
\end{lem}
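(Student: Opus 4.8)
The plan is to reduce the claim to a single simplex and then exploit the alternating signs in the boundary map to force a pairwise cancellation. Since both $\partial_p$ and $\partial_{p+1}$ are defined on chains by extending linearly from their action on individual simplices, the composition $\partial_p\partial_{p+1}$ is itself linear; hence it suffices to verify $\partial_p\partial_{p+1}\sigma = 0$ for an arbitrary $(p+1)$-simplex $\sigma = [u_0 u_1 \cdots u_{p+1}]$, and the statement for a general $(p+1)$-chain $c = \sum_i a_i \sigma_i$ follows immediately by linearity.

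First I would apply the definition of $\partial_{p+1}$ to write $\partial_{p+1}\sigma = \sum_{j=0}^{p+1}(-1)^j[u_0\cdots\hat{u}_j\cdots u_{p+1}]$, a sum of $p$-simplices, and then apply $\partial_p$ to each summand. The essential bookkeeping step is to correctly track the position of each surviving vertex after $u_j$ has already been deleted: when I next delete a vertex $u_i$ with $i<j$, it still occupies its original slot and contributes sign $(-1)^i$, whereas when I delete $u_i$ with $i>j$, it has shifted one position to the left and contributes sign $(-1)^{i-1}$. Carrying this through yields
\[\partial_p\partial_{p+1}\sigma = \sum_{i<j}(-1)^{i+j}[\,\cdots\hat{u}_i\cdots\hat{u}_j\cdots\,] + \sum_{i>j}(-1)^{i+j-1}[\,\cdots\hat{u}_j\cdots\hat{u}_i\cdots\,].\]

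Finally I would relabel the second sum by interchanging the names of the two deleted indices, so that in both sums the smaller index is listed first. Under this relabeling the second sum becomes $\sum_{a<b}(-1)^{a+b-1}[\,\cdots\hat{u}_a\cdots\hat{u}_b\cdots\,]$, which is precisely the negative of the first sum. Thus every codimension-$2$ face $[\,\cdots\hat{u}_a\cdots\hat{u}_b\cdots\,]$ appears exactly twice with opposite signs and cancels, leaving $\partial_p\partial_{p+1}\sigma = 0$, as desired.

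I expect the main obstacle to be the sign bookkeeping in the middle step, and specifically justifying the off-by-one shift $(-1)^{i-1}$ for indices exceeding $j$. This shift is exactly what makes the two families of terms differ by an overall sign and hence cancel rather than reinforce; getting it wrong would produce a factor of $2$ instead of $0$. Everything else is a routine consequence of linearity and the relabeling.
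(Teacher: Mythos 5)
Your proof is correct: the linearity reduction, the position-shift bookkeeping giving $(-1)^i$ for $i<j$ and $(-1)^{i-1}$ for $i>j$, and the relabeling that exhibits each codimension-2 face twice with opposite signs are all sound, and this is precisely the standard argument in the cited reference. The paper itself offers no proof of this lemma (it simply cites \cite{Edelsbrunner}), so there is nothing further to compare against.
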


As suggested, the $p$-th Betti number corresponds to the number of $p$-dimensional ``holes" in the corresponding simplicial complex. In fact, the $0$-th Betti number counts the number of connected components, the $1$-st Betti number counts the number of (independent) loops, the $2$-nd Betti number counts the number of (independent) voids, etc.

\subsection{Persistent Homology}

Although simplicial homology is sufficient for capturing some intrinsic shape characteristics of a fixed simplicial complex, the natural question that arises with point cloud data is how to construct a simplicial complex from the points to provide a meaningful representation of latent structure in the cloud. In fact, there may be topological features of interest for one complex built on the data (for instance at some fine scale), and another set of topological features (at a larger scale) of interest from another complex. A promising approach to deal with this question is persistent homology, an extension of homology that considers a parameterized family of simplicial complexes, rather than just one. 

\begin{defn}
    Let $K_s$ be a simplicial complex for each $s \in \mathbb{R}$ such that $K_s \subset K_t$ for all $s \leq t$. We refer to such a collection of complexes as a \textbf{filtration}. When each $K_s \subseteq K$, for some fixed finite simplicial complex $K$, and $K_t = K$ for some $t \in \mathbb{R}$, we'll refer to the parameterized collection as a \textbf{filtration of} $K$.
\end{defn}
Note that a filtration on a finite simplicial complex $K$ necessarily only contains finitely many distinct subcomplexes, which we can relabel 
$$\emptyset := K_0 \subseteq K_{1} \subseteq K_{2} \subseteq \dots \subseteq K_{n} = K.$$
Here we have relabelled $K_i := K_{s_i}$ for the scales $s_{i} \leq s_{i+1}$, at which the subcomplexes change. Each simplex $\sigma \in K$ may also be assigned the minimum scale at which it appears in the filtration. In other words, one may define $f:K \to \R$ by $f(\sigma)=t$ if $\sigma \in K_t$ and $\sigma \notin K_s$ for any $s<t$. Thus, a filtration induces a so-called \textit{monotonic function} on the simplices of $K$, where $f(\sigma) \leq f(\tau)$ if $\sigma$ is a face of $\tau$. Conversely, any monotonic function $f: K \rightarrow \R$ with the property that $f(\sigma) \leq f(\tau)$ if $\sigma \subseteq \tau$, defines a filtration on $K$ by taking $K_s = f^{-1}(\infty, s].$ 

\begin{defn}
    Let $\emptyset \subseteq K_1 \dots \subseteq K_n = K$ be a filtration of $K$. The \textbf{$p$-th persistent homology groups} of the filtration are then formally defined as
    \[H_p^{i,j} = Z_p(K_i)/\left(B_p(K_j) \cap Z_p(K_i)\right).\]
    for $0 \leq i \leq j \leq n$. The corresponding \textbf{$p$-th persistent Betti numbers} are the ranks of these groups,
    \[\beta_p^{i,j} = \text{rank}\ H_p^{i,j}.\]
\end{defn}
An element of $H_p^{i,j}$ corresponds to a cycle in the filtration that persisted from $K_i$ to $K_j$ (i.e. a cycle in $K_i$ that did not become a boundary in $K_j$) and the content of all the $p$-th persistent homology groups of a filtration can be summarized in a dimension-$p$ persistence diagram which tracks the pairs of indices in the filtration at which homological features first appear and later disappear. It is common practice to say a feature is ``born" in complex $K_i$ and ``dies" in complex $K_j$ if the scale it became a cycle (without being a boundary) is in $K_i$ and earliest complex it becomes a boundary is $K_j$.

\begin{defn}
    Let $\mu_p^{i,j}$ be the number of $p$-dimensional classes born in complex $K_i$ that die entering complex $K_j$. Then
    \[\mu_p^{i,j} = (\beta_p^{i,j-1}-\beta_p^{i,j})-(\beta_p^{i-1,j-1}-\beta_p^{i-1,j}).\]
    The \textbf{$p$-persistence diagram} of the filtration given by $f: K \rightarrow \R$, denoted $\text{Dgm}_p(f)$, is a multiset of pairs $(s_i,s_j)$ in the extended real plane $\bar{\R}^2$ with multiplicity $\mu_p^{i,j}$. Each pair $(s_i,s_j)$ represents a nontrivial persistent homology class that is born in complex $K_i = K_{s_i}$ and dies upon entering complex $K_j = K_{s_j}$ because it merges with a homology class that was born before $K_i$.
\end{defn}

Now, we can define a metric we can use to find the distance between two $p$-persistence diagrams.
\begin{defn}
    Let $X$ and $Y$ be two $p$-persistence diagrams. Define $||x-y||_\infty := \max \{|x_1-y_1|,|x_2-y_2|\}$ for $x=(x_1,x_2) \in X, y=(y_1,y_2) \in Y$. We define the \textbf{bottleneck distance} between the diagrams as

    $$W_\infty(X,Y) = \inf_{\eta:X\to Y} \sup_{x\in X}||x-\eta(x)||_\infty$$
    where the infimum is taken over all bijections where $\eta$ can map to the diagonal if $X$ and $Y$ have different cardinalities. 
\end{defn}

A valuable property of persistence diagrams built from filtrations on a fixed complex $K$ is their stability with respect to changes in the monotonic function determining filtrations on $K$:
 \begin{thm}[\cite{Edelsbrunner}]\label{thm:edelstabilitythm}
        Let $K$ be a finite simplicial complex and $f,g: K \to \R$ two monotonic functions. For each dimension $p$, the bottleneck distance between the diagrams $\text{Dgm}_p(f)$ and $\text{Dgm}_p(g)$ satisfies
        $$W_\infty\left(\text{Dgm}_p(f),\text{Dgm}_p(g)\right) \leq ||f-g||_{\infty},$$
        where $||f-g||_{\infty} = \max_{\sigma \in K} |f(\sigma) - g(\sigma)|.$
    \end{thm}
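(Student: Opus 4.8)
The plan is to exploit the fact that both filtrations live on the \emph{same} complex $K$ and differ only through their monotonic functions, so that the two sublevel-set filtrations are tightly interleaved. Writing $\delta = \|f-g\|_\infty$ and $K^f_s = f^{-1}(-\infty,s]$, $K^g_s = g^{-1}(-\infty,s]$, the bound $|f(\sigma)-g(\sigma)|\le\delta$ for every simplex $\sigma$ gives immediately that $\sigma\in K^f_s$ forces $g(\sigma)\le f(\sigma)+\delta\le s+\delta$, hence $K^f_s\subseteq K^g_{s+\delta}$, and symmetrically $K^g_s\subseteq K^f_{s+\delta}$, for all $s$. Thus the two filtrations are $\delta$-interleaved, and the entire task reduces to converting this interleaving into a bottleneck matching of cost at most $\delta$.

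First I would record the consequences of the interleaving on the persistent Betti numbers. Whenever $i+\delta\le j-\delta$, the inclusions above chain into $K^g_i\subseteq K^f_{i+\delta}\subseteq K^f_{j-\delta}\subseteq K^g_j$, so the inclusion-induced map $H_p(K^g_i)\to H_p(K^g_j)$ factors through $H_p(K^f_{i+\delta})\to H_p(K^f_{j-\delta})$. Since the rank of a composite never exceeds the rank of any of its factors, this yields $\beta_p^{i,j}(g)\le\beta_p^{i+\delta,\,j-\delta}(f)$, and the mirror argument with $f$ and $g$ exchanged gives $\beta_p^{i,j}(f)\le\beta_p^{i+\delta,\,j-\delta}(g)$. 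These are the only structural facts I need; everything afterward is bookkeeping on the diagram.

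Next I would translate these Betti-number inequalities into statements about the points of the diagrams. Using the relationship already built into the definition of $\mu_p^{i,j}$, together with its generalization to arbitrary axis-aligned boxes strictly above the diagonal---where the number of diagram points in a box equals a signed sum of the four corner values $\beta_p^{\cdot,\cdot}$---the interleaving inequalities show that the number of points of $\text{Dgm}_p(g)$ inside any such box $R$ is at most the number of points of $\text{Dgm}_p(f)$ inside the box $R$ enlarged by $\delta$ in each coordinate direction, and conversely. Informally, no cluster of points of one diagram can appear without a matching cluster of the other within $\ell^\infty$-distance $\delta$.

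The final and hardest step is to assemble these box-count comparisons into a single bijection $\eta:\text{Dgm}_p(f)\to\text{Dgm}_p(g)$, allowing points to be sent to the diagonal, with $\|x-\eta(x)\|_\infty\le\delta$ for all $x$. I expect the main obstacle to be precisely this existence argument: the interleaving delivers only local, box-by-box inequalities, whereas the bottleneck distance demands one global matching. The standard route is a Hall/defect marriage argument---any point of $\text{Dgm}_p(f)$ lying at $\ell^\infty$-distance greater than $\delta$ from every point of $\text{Dgm}_p(g)$ and from the diagonal can be surrounded by a box that contains strictly more $f$-points than the corresponding enlarged box allows $g$-points, contradicting the comparison---so a matching of cost at most $\delta$ must exist, giving $W_\infty\!\left(\text{Dgm}_p(f),\text{Dgm}_p(g)\right)\le\delta$. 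Carefully handling the limiting (``half-open corner'') versions of the counting lemma, the multiplicities, and the behaviour near the diagonal is where the real care lies. An alternative I would keep in reserve avoids the matching machinery entirely: interpolate by $f_\lambda=(1-\lambda)f+\lambda g$, which is monotonic for each $\lambda\in[0,1]$ since the face inequality is preserved under convex combinations, and track $\text{Dgm}_p(f_\lambda)$ as $\lambda$ increases; the induced order on simplex values changes only finitely often, and at each such transposition the two swapping values are equal, so the diagram moves continuously with ``speed'' bounded by $\|f-g\|_\infty$, and integrating over $\lambda\in[0,1]$ recovers the same bound.
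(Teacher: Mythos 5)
The paper does not actually prove this theorem---it is imported verbatim from Edelsbrunner--Harer's \emph{Computational Topology} (Chapter VIII) and used as a black box, so there is no in-paper proof to compare against. Your outline is, in substance, the proof from that cited reference: the observation that $\|f-g\|_\infty\le\delta$ makes the sublevel filtrations $\delta$-interleaved, the resulting inequalities on persistent Betti numbers (the Quadrant Lemma), their translation into box counts on the diagrams (the Box Lemma), and finally the assembly into a bottleneck matching. Where you and the reference part ways is exactly the step you flag as hardest. Your primary route---a Hall/defect argument---is under-justified as written: the contradiction you describe (a point of $\text{Dgm}_p(f)$ farther than $\delta$ from everything in $\text{Dgm}_p(g)$ and from the diagonal violates the box comparison) only verifies Hall's condition for singletons. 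Hall's theorem requires that \emph{every} finite subset $S$ of diagram points (with multiplicity) have at least $|S|$ admissible partners, and two nearby $f$-points competing for a single $g$-point far from the diagonal is not excluded by the singleton case alone; ruling it out requires a union-of-boxes argument that needs genuine care with overlapping enlargements and multiplicities. Edelsbrunner--Harer avoid this entirely by the route you hold in reserve: the convex interpolation $f_\lambda=(1-\lambda)f+\lambda g$, an ``easy bijection'' lemma valid when $\delta$ is smaller than half the minimum off-diagonal separation, a sufficiently fine partition of $[0,1]$ so that consecutive diagrams satisfy that smallness condition, and the triangle inequality for $W_\infty$ together with $\|f_\lambda-f_\mu\|_\infty=|\lambda-\mu|\,\|f-g\|_\infty$ to keep the total cost at $\delta$ rather than letting it accumulate. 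So your backup plan is the one that actually closes the argument cleanly; I would promote it to the main line and keep the box-counting comparison as the local ingredient it feeds on.
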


We will also use a notion of distance between subsets of a metric space to control the distance between point clouds living in $\R^D$.
\begin{defn}\label{defn:hausdorff}
    Let $X$ and $Y$ be two non-empty subsets of a metric space $(M,d)$. We define their \textbf{Hausdorff distance} $d_H(X,Y)$ by
    \[d_H(X,Y):= \inf\{\varepsilon>0\ ;\ X \subseteq Y_\varepsilon\ \text{and}\ Y \subseteq X_\varepsilon\},\]
    where
    \[X_\varepsilon:= \bigcup_{x \in X} \{m \in M\ ;\ d(x,m) \leq \varepsilon\}.\]
\end{defn}
For finite subsets $X, Y \subset (M,d)$ (i.e., finite point clouds), the Hausdorff distance reduces to the maximum distance from a point in one set to the closest point in the other set. We prove a related statement in Lemma \ref{lem:hausdorffequaltopointdist}.

The formal notion of a filtration on a simplicial complex and its persistent homology groups defined in this section provide a means to extract multiscale structure from point cloud data, and thereby alleviate some of the concern about which complex may best capture structure in the cloud. Still, the methods of constructing a filtration from a point cloud are numerous and come with advantages and disadvantages that depend on the nature of the data.

\subsection{Vietoris-Rips and Alpha Complexes}
\label{rips}
One of the simplest and most commonly used methods to build a filtration on a finite point cloud, $X \subset (M,d)$, is to treat the points as vertices and add simplices at a scale determined by their diameter in the metric space.

\begin{defn}
    Let $X \subset \R^D$ be a point cloud and $\varepsilon \geq 0$. The \textbf{Vietoris-Rips complex} at scale $\varepsilon$ is defined as
    \[\text{VR}_{\varepsilon}(X) = \{\sigma \subseteq X\ |\ d(x,x') \leq 2\varepsilon,\ \forall x,x' \in \sigma\}.\]
\end{defn}

\noindent In other words, for a given scale $\varepsilon\geq 0$, if $d(x,x')\leq 2\varepsilon$ for $x,x' \in X$, one adds the $p$-simplex $\sigma = [x_0x_1\dots x_k]$ to the complex at the largest scale of any of its edges.

Although an algorithm to compute the Rips complex at any scale is simple to implement, constructing it on point cloud, $X$, with large numbers of points results in a computational challenge: eventually (at scales at and beyond half the diameter of the point cloud) the Rips complex will be equal to the powerset of $X$ and so will contain $2^{|X|}$ simplices. Moreover, the Rips complex will eventually contain simplices of all dimensions (up to the size of the point cloud minus 1), and so will contain homological information even beyond the dimension of the cloud (assuming the data lives in a finite-dimensional vector space). In practice this can be mitigated by imposing a restriction on the maximum dimension of simplices included in any complex in the filtration. Moreover, it is often the case in practice that, as the scale increases, additional simplices may appear in the complex that do not affect its homology \cite{lc-reduction, ADAMASZEK20171}. Filtration methods which avoid inclusion of ``extraneous'' simplices may be preferable for large point clouds \cite{sheehy2012linear, Guibas2007ReconstructionUW, Silva2004TopologicalEU}. Before defining examples of such methods, it will be helpful for the remainder of the paper to define when a Euclidean point cloud is in ``general position".

\begin{defn}
    A set of points in a $d$-dimensional Euclidean space is in \textbf{general position} if no $d+2$ of them lie on a common ($d-1$)-sphere.
\end{defn}

\noindent For example, this means for a set of points to be in general position in $\R^2$, no 4 of them can be co-circular. This condition ensures that each subset of $d+1$ points lie on a unique $d$-dimensional sphere which will ensure a unique Delaunay triangulation, defined below. 

Let $X \subset \mathbb{R}^D$ be a finite point cloud. Let $x \in X$ and define
$$V_x := \{p \in \R^D\ |\ d(p,x) \leq d(p, x')\ \forall x' \in X\}.$$
Each $V_x$ is called a \textbf{Voronoi cell} of $X$ and captures all points which are not closer to any other point in $X$ than $x$. Note that $\{V_x\}_{x \in X}$ forms a cover of $\R^D$. This cover is known as the Voronoi decomposition of $\R^D$ with respect to $X$. To construct the Delaunay triangulation from this cover, we define

$$\Del(X) := \{\sigma \subset X\ |\ \bigcap_{x \in \sigma} V_x \neq \emptyset\}.$$

It is known that $\Del(X)$ is itself a simplicial complex \cite{Edelsbrunner}. An $n$-simplex $\sigma \in \Del(X)$ will be referred to as a \textbf{Delaunay simplex}. We will use $\Del(X)$ as the underlying structure when defining the Delaunay-Rips complex in Section \ref{del-rips:def}. 

First we recall another commonly used filtration construction, known as the Alpha filtration, well studied for point clouds $X \subset \R^D$. We recall the definition given in III.4 of \cite{Edelsbrunner}. 

\begin{defn}
Let $\varepsilon\geq 0$ and let $S_x(\varepsilon) := V_x \cap B_x(\varepsilon)$, where $B_x(\varepsilon)$ is the $d$-dimensional ball of radius $\varepsilon$ centered on $x \in X$. The \textbf{Alpha complex} at scale $\varepsilon\geq 0$ is
\[\text{Alpha}_\varepsilon(X) = \{\sigma \subseteq X\ |\ \bigcap_{x \in \sigma} S_x(\varepsilon) \neq \emptyset\}.\]
\end{defn}

Note that since $S_x(\varepsilon) \subseteq V_x$, the set of 1-simplices of the $Alpha_{\epsilon}$ complex form a subcomplex of the 1-skeleton of the Delaunay triangulation.

By construction, $VR_{\varepsilon}(X)$ and $Alpha_{\varepsilon}(X)$ are simplicial complexes for all $\varepsilon \in \mathbb{R}$, and if $s \leq t$, both $Alpha_{s} \subseteq Alpha_{t}$ and $VR_{s} \subseteq VR_{t}$. Thus, each filtration construction yields an ordering on a set of simplices in a simplicial complex built on the point cloud $X$. For Vietoris-Rips, the scale of each simplex is determined by the distance between the farthest two vertices that define the simplex. However, Vietoris-Rips also assigns a non-zero weight to every subset of a set of vertices which is an exponentially slow computation in the number of vertices. Alpha, on the other hand, does not compute scales for every subset of the set of vertices. Rather, the scales assigned to simplices are determined by when the restricted epsilon balls on the Voronoi cells intersect. This additional computation is what we seek to avoid in our construction of the Delaunay-Rips complex in the subsequent section.

\section{The Delaunay-Rips Complex and Stability}
\label{sec:delrips}
\subsection{Definition and Construction} \label{del-rips:def}
Combining the Alpha and Rips constructions provides an alternative method of building a family of complexes on a point cloud $X \subset \mathbb{R}^d$. The idea is similar to the construction of the Delaunay-\u Cech complex defined in \cite{Bauer_2016} and can be seen as a special case of a lazy weak witness complex \cite{Silva2004TopologicalEU}. Delaunay-Rips utilizes the conceptual simplicity of the Vietoris-Rips complex while cutting down on the number of high dimensional and potentially extraneous simplices. The idea is that we build the Vietoris-Rips complex on $X$ but only add edges if the edges occur in the Delaunay 1-skeleton of the point cloud. The higher dimensional $p$-simplices are then added in the traditional Vietoris-Rips manner, i.e., if and only if their 1-skeletons appear.

\begin{defn}\label{defn:del-rips_complex}
The \textbf{Delaunay-Rips (DR) complex} for a given scale $\varepsilon\geq0$ is defined,
\[\text{DR}_{\varepsilon}(X) = \{\sigma \subseteq \Del(X)\ |\ d(x,x') \leq 2\varepsilon,\ \forall x,x' \in \sigma\}.\]
\end{defn}
Just as previously with Vietoris-Rips and Alpha filtrations, the Delaunay-Rips filtration can be thought of as a method for building a complex on a point cloud and assigning a scale, and thus a monotonic ordering, to the simplices in our complex built on $X$.

\begin{figure}[ht]
    \centering
    \includegraphics[width=\columnwidth]{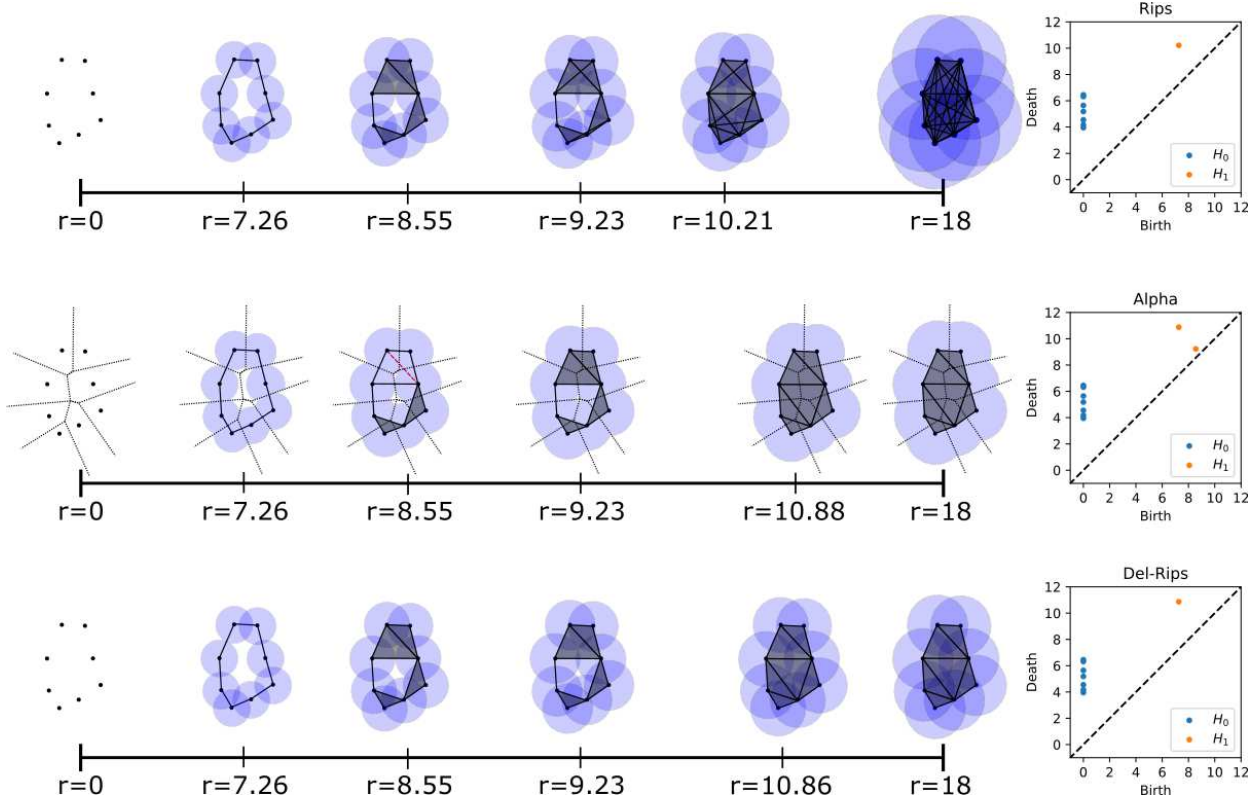}
    \caption{On the left, from top to bottom are the Rips, Alpha, and Delaunay-Rips filtrations on a point cloud with 8 points in $\R^2$ (black dots). On the right are the corresponding $H_0$ and $H_1$ persistence diagrams associated to each filtration. In each filtration the shaded circles have radii $r$ at each scale value, and edges and triangles are introduced according to definition of the given filtration at the specified scale. The dotted gray lines in the Alpha filtration show the boundaries of the Voronoi cells. In the PDs, the blue points represent the $H_0$ classes (connected components) and the orange points represent the $H_1$ classes (loops).}
    \label{fig:compare_filtrations_and_pds}
\end{figure}

Figure \ref{fig:compare_filtrations_and_pds} illustrates and compares examples of Rips, Alpha, and Delaunay-Rips filtrations, built on a cartoon point cloud in $\mathbb{R}^2$, and their associated $H_0$ and $H_1$ persistence diagrams. The Rips filtration produces a homology class that is born at scale value 7.26 and persists until dying at scale value 10.21. This is represented by the birth and death coordinates of the single orange point in the PD. Notice that although there are two loops in the topological surface of the overlapping circles at scale value 8.55, Rips does not capture the second loop in the upper portion because a triangle is added as soon as its 3 boundary edges appear. This is in contrast to the Alpha filtration which produces a homology class at scale value 7.26 that persists until scale value 10.88 and another $H_1$ class born at 8.55 and dies at scale value 9.23.  Recall that the Alpha complex at a particular scale value requires the balls centered at each 0-simplex of a $d$-simplex in $\R^D$ to be restricted to the Voronoi cells of the respective $0$-simplex. Hence, the $d$-simplex is only added to the complex when all of the restricted balls intersect. In our example, the dashed red edge--which appears at scale 8.55, along with the higher dimensional triangles it creates, in the Rips and DR filtrations---does not appear until scale value 9.23 in the Alpha filtration due to the aforementioned property of the Alpha complex. Particularly, that edge needs to wait until the radii of the balls equals the radius of the circle containing the 3 points that make up the simplex that edge lies on. 

The DR filtration produces an $H_1$ homology class that persists from scale value 7.26 to 10.86. In the DR construction, we add triangles as soon as their 3 edges appear in the complex: this is illustrated as a shaded triangle in the figure. Notice how at $r=18$ although many balls overlap, we have only added the simplices that appear in the Delaunay triangulation of the point cloud. The ending complex is combinatorially the exact same as the final Alpha complex. A notable comparison here is that the Del-Rips PD shows an $H_1$ class born at the same scale value as it was in the Rips filtration, but it persists for longer (due to the combinatorial cut down in simplices). Another notable comparison is that Del-Rips does not capture a second $H_1$ class in the PD like Alpha and the only $H_1$ class that is captured does not persist for quite as long as the corresponding $H_1$ class in Alpha.

As a remark, the reader may wonder if that second loop should be captured in the PD or not. Although there is a scale value at which the balls overlap in such a way as to reveal two loops in the data, it is not clear whether capturing the less persistent loop would be valuable in 
any particular application. Employing Alpha certainly captures that other loop well, but at the cost of computational efficiency as it seeks to balance the weights on the simplices across varying dimensions appropriately.  Section \ref{sec:ml-comparisons} partially addresses whether we can sacrifice the fidelity of Alpha to the underlying topological structure of the data for the advantage of a computational speed-up.

\subsection{Implementation and Runtime Analyses}
\label{sec:runtimes}

Here we show empirical results of the performance of computing persistence diagrams using the Delaunay-Rips filtration on varying datasets. We fix our field of coefficients to be $\Z_2$ when computing persistent homology groups. Algorithm \ref{alg:one}, which we have implemented in Python \cite{Amish_Mishra_Delaunay-Rips}, constructs the DR filtration across scales. This filtration then gets passed to the Persistent Homology Algorithms Toolbox (PHAT) \cite{BAUER201776} to construct the boundary matrix, reduce the boundary matrix, and extract the persistent pairs. Experiments were run on a computer with an Intel i7-10875H processor running at 2.3 GHz, 64 GB of RAM, and running Ubuntu 20.04.5 LTS. The runtimes are measured as the time between the data set being input into the corresponding algorithm and the persistence diagram being produced. For comparison, we choose the Python module Ripser \cite{ctralie2018ripser} and a Python Alpha implementation from the Python package Cechmate \cite{cechmate}. Figure \ref{fig:runtime_pts} shows how runtimes vary with the number of points being sampled from a noisy 2-sphere. Each data point on the plot is the median of 10 trials. The box-and-whisker plots on each data point show the max, min, and interquartile ranges of the runtimes. Similarly, Figure \ref{fig:runtime_dim} shows the runtimes as we vary the dimension of the sphere.

\SetKwComment{Comment}{/* }{ */}

\begin{algorithm}
\caption{An algorithm to compute Delaunay-Rips Filtration}\label{alg:one}
\KwIn{$P = (\mathbf{p_1},\dots,\mathbf{p_n})$, $dim$ = maximum homology dimension to compute}
\KwOut{Delaunay Rips Filtration}
triangulation $\leftarrow$ Delaunay($P$)\\
filtration $\leftarrow [([i], 0)]$ 
for $1\leq i\leq n$ \tcc{Add the 0-simplices into the filtration}
\For{each simplex $\in$ triangulation}{
    \For{$d \leftarrow 1$ \KwTo $dim+1$}{
        $faces \leftarrow$ all $d$-simplex subsets of current $simplex$\\
        \For{each face $\in$ faces}{
            \If{face $\notin$ filtration and $d == 1$}{
                $value \leftarrow$ distance($face[0]$, $face[1]$) \tcc{Calculate the euclidean distance}
                filtration.append(($face$, $value$)) \tcc{Append 1-simplices}
            }
            \If{face $\notin$ filtration and $dim > 1$}{
                find $subface$ of $face$ with greatest $value$\\
                filtration.append(($subface$, $value$)) \tcc{Add higher order simplices}
            }
        }
    }
}
\end{algorithm}

\begin{figure}[ht!]
    \centering
    \includegraphics[width=\columnwidth]{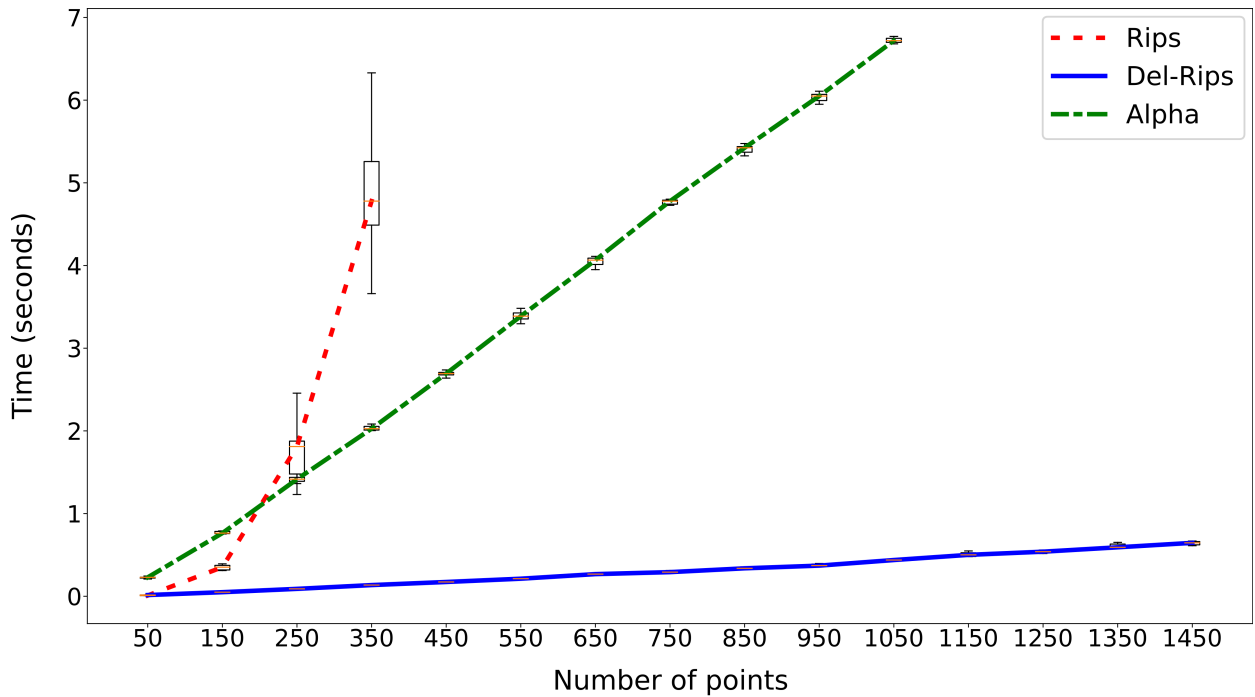}
    \caption{Runtime comparison of Rips, Alpha, and Delaunay-Rips as number of points are increased. Data set is taken from the surface of a 2-sphere of radius 1 with 0.1 noise. The maximum runtime allowed was 7 seconds.}
    \label{fig:runtime_pts}
\end{figure}

\begin{figure}[ht!]
    \centering
    \includegraphics[width=\columnwidth]{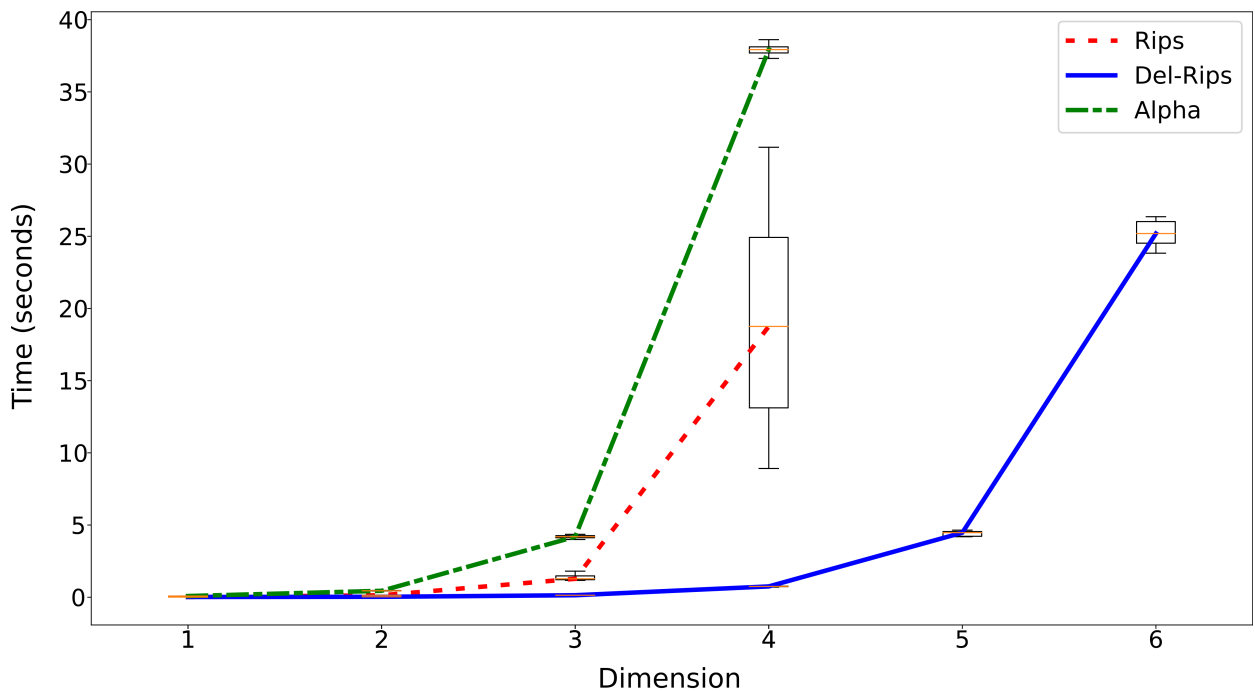}
    \caption{Runtime comparison of Rips, Alpha, and Delaunay-Rips as dimension of $d$-sphere increases. (Note that 1, 2, 3 on the x-axis correspond to 1-sphere, 2-sphere, 3-sphere). Data set is 100 points from the surface of a $d$-sphere of radius 1 with 0.1 noise. The maximum runtime allowed was 45 seconds.}
    \label{fig:runtime_dim}
\end{figure}

Notice that the Rips computation slows down in higher dimensions because we insist Ripser compute simplices up to the ambient dimension of the data set as is computed for Alpha and DR filtrations. For example, for data points on a 3-sphere (in $\R^4$), Rips, Alpha, and DR compute $H_0, H_1, H_2,$ and $H_3$ classes. Particularly, for Rips, this increase in dimension exponentially increases the number of simplices and thus increases the size of the boundary matrix which causes computational slowdown.

Notice that the Alpha computation slows down dramatically compared to DR even though both rely on computing the Delaunay triangulation. This is due to the computational overhead computing the scales at which multiple Voronoi cells intersect to assign simplex weights. Since DR avoids this computation, which is expensive in the Python implementation of Alpha in Cechmate, the practical speedup is significant. It is important to note that alternative implementations of Alpha (such as GUDHI \cite{gudhi:AlphaComplex}) may show speedups over the Python implementation of DR due to implementation details (e.g., the choice of implementation language), and while the the filtered complexes built by Alpha and Del-Rips are the same, we expect DR to enjoy some advantage in practice due to fast(er) calculation of simplex scales in optimized versions of these methods.

\subsection{Stability Properties of the Delaunay-Rips Complex}
\label{sec:stability}

Towards understanding the impact a perturbation of the underlying point cloud data has on the resulting DR persistence diagram, we adopt the notion of an $\varepsilon$-perturbation of a point cloud from \cite{Weller1997StabilityOV}. If $P = \{p_1,\dots,p_n\} \subset \R^D$ denotes a point cloud we let $P' = \{p_1',\dots,p_n'\}$ denote a perturbation of $P$, where we imagine placing $\varepsilon < \left(\min_{p_i,p_j \in P}\{d(p_i,p_j)\}\right)/2$ balls around each point $p_i \in P$ and selecting from each ball a perturbed point $p_i' \in P'$. 
    \begin{defn}\label{defn:epsilon_perturb}
        We call $P' = \{p_1', \ldots, p_n'\}$ an $\varepsilon$-perturbation of $P = \{p_1, \ldots, p_n\}$ if for each $p \in P$ there is exactly one $p' \in P'$ such that $d(p,p') < \varepsilon$ and, conversely, $p$ is the only point in $P$ within $\varepsilon$ of $p'$. We refer to each $p$, $p'$ as a perturbation pair.
    \end{defn}
     In other words, there is a bijection $\rho: P \rightarrow P'$ associating to each point $p_i \in P$ a point $p_i' \in P'$, so that $d(p_i, \rho(p_i)) < \varepsilon$.
    
\begin{lem}\label{lem:hausdorffequaltopointdist}
        Given that $P'$ is an $\varepsilon$-perturbation of a $P \subset \R^D$, there exists a perturbation pair $p_i \in P$ and $p_i' \in P'$ such that
        \[d_H(P,P') = d(p_i,p_i').\]
    \end{lem}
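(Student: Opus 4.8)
The plan is to reduce the Hausdorff distance to a maximum taken over the $n$ perturbation pairs, by showing that for each point its nearest neighbor in the opposite cloud is precisely its perturbation partner. First I would record the finite form of the Hausdorff distance implied by Definition \ref{defn:hausdorff}: for finite point clouds $P, P'$,
\[
d_H(P,P') = \max\Bigl\{\max_{p\in P}\min_{p'\in P'} d(p,p'),\ \max_{p'\in P'}\min_{p\in P} d(p,p')\Bigr\}.
\]
This is the ``maximum distance to the closest point'' reduction stated just before the lemma, and it is the expression I intend to simplify.

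The \emph{key step} is the following nearest-neighbor claim. Let $\rho: P \to P'$ be the bijection from Definition \ref{defn:epsilon_perturb}, so that each perturbation pair satisfies $d(p,\rho(p)) < \varepsilon$. I claim $\rho(p)$ is the unique nearest neighbor of $p$ in $P'$. Indeed, for any $q \in P$ with $q \neq p$, the separation hypothesis $\varepsilon < \tfrac{1}{2}\min_{i\neq j} d(p_i,p_j)$ gives $d(p,q) > 2\varepsilon$, so the triangle inequality yields
\[
d\bigl(p, \rho(q)\bigr) \geq d(p,q) - d\bigl(q, \rho(q)\bigr) > 2\varepsilon - \varepsilon = \varepsilon > d\bigl(p, \rho(p)\bigr).
\]
Hence $\min_{p'\in P'} d(p,p') = d(p,\rho(p))$. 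A symmetric argument, exchanging the roles of $p$ and $\rho(p)$, shows that $\rho^{-1}(p')$ is the nearest neighbor of $p'$ in $P$, so $\min_{p\in P} d(p,p') = d\bigl(\rho^{-1}(p'),p'\bigr)$.

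Substituting these two identities into the displayed max expression, both inner terms collapse to $\max_{p\in P} d\bigl(p, \rho(p)\bigr)$, the largest perturbation-pair distance. Since $P$ is finite this maximum is attained at some pair $(p_i, p_i')$ with $p_i' = \rho(p_i)$, giving $d_H(P,P') = d(p_i, p_i')$ as desired.

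I expect the main obstacle to be the nearest-neighbor claim, since that is the one place where the separation hypothesis must be converted, via the triangle inequality, into a \emph{strict} inequality ruling out any cross-pairing beating the partner distance; the remaining steps are bookkeeping. A subtle point worth stating carefully is the strictness: partner distances are bounded by $\varepsilon$ while all cross distances strictly exceed $\varepsilon$, which both guarantees the nearest neighbor is unique and makes the reduction to perturbation pairs exact rather than merely an inequality.
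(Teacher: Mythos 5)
Your proof is correct, and it rests on the same essential fact as the paper's argument---the Hausdorff distance is realized by the farthest perturbation pair because each point's nearest neighbor in the opposite cloud is its partner---but it is packaged differently. The paper argues directly from Definition \ref{defn:hausdorff}: it sets $\delta$ equal to the largest partner distance, checks $P \subseteq P'_\delta$ and $P' \subseteq P_\delta$ to get $d_H(P,P') \leq \delta$, and then rules out every $\lambda < \delta$ by showing the maximally perturbed point $x$ lies outside $P'_\lambda$. You instead start from the finite max--min form of $d_H$ (which the paper asserts in prose but does not prove) and collapse both inner minima to partner distances. Two remarks. First, your route makes explicit, and actually proves, the nearest-neighbor claim that the paper's lower-bound step silently relies on when it writes $\min_{p' \in P'} d(x,p') = d(x,x')$; spelling that out is a genuine improvement in rigor. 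Second, your derivation of that claim via the separation hypothesis $\varepsilon < \tfrac{1}{2}\min_{i \neq j} d(p_i,p_j)$ and the triangle inequality is valid but heavier than necessary: the uniqueness clause of Definition \ref{defn:epsilon_perturb} already says that $\rho(p)$ is the \emph{only} point of $P'$ within $\varepsilon$ of $p$, so every other $p' \in P'$ satisfies $d(p,p') \geq \varepsilon > d(p,\rho(p))$ immediately, and symmetrically for points of $P'$. Either way, the argument is sound and the conclusion follows.
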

    \begin{proof}
                
        First, identify the perturbation pair, $x \in P$ and $x' \in P'$ which are farthest from one another among all pertubation pairs and let
        \begin{equation}\label{eqn:delta_defined}
             \delta := d(x,x') = \max_{p \in P, p'\in P'}\left(d(p,p')\right).
         \end{equation}
         By such a choice, we ensure that for any $p'\in P'$, it must be that $p' \in P_\delta$ since $d(p,p') \leq d(x,x') = \delta.$ 
         Similarly, for any $p \in P$, $d(p,p') \leq d(x,x') = \delta$ and so $P \subseteq P'_\delta$. Therefore $d_H(P,P') \leq \delta$ by definition \ref{defn:hausdorff}. 
         
         Now, assume $\lambda < \delta$. For any $p' \in P'$
         \[d(x,p') \geq \min_{p' \in P'}\left(d(x,p')\right) = d(x,x') = \delta > \lambda.\]
         Therefore $x \notin P'_\lambda$ since it is not in any $\lambda$-ball around the points of $P'$, therefore $d_H(P,P') > \lambda$ for every $\lambda < \delta$. Thus, $d_H(P,P') = \delta = d(x,x')$. 
        That is to say, the Hausdorff distance between $P$ and $P'$ is exactly equal to the largest distance a point in $P$ was perturbed within an $\varepsilon$-perturbation.
    \end{proof}

    To leverage theorem \ref{thm:edelstabilitythm} we will pursue conditions that ensure the underlying Delaunay triangulation doesn't change under a perturbation of the points. We say that a finite $P \subset \mathbb{R}^D$ and an $\varepsilon$-perturbation, $P'$ have the same Delaunay triangulation with respect to the perturbation pairing if $\sigma = [p_{i_0}\ldots p_{i_k}] \in \text{Del}(P)$ if and only if $\sigma' = [p_{i_0}'\ldots p_{i_k}'] \in \text{Del}(P')$ for all $0 \leq k \leq D$, where $p_{i_l}'$ is the unique point in $P'$ uniquely associated to $p_{i_l}$, with $d(p_{i_l},p_{i_l}')< \varepsilon$. As abstract simplicial complexes, $\text{Del}(P)$ and $\text{Del}(P')$ are indistinguishable, since the association of perturbation pairs induces a bijection between simplices. Thus we will not distinguish the associated simplices in these complexes.
    
    \begin{thm}\label{thm:hausdorffstability}
        Let $P \subset \R^D$ be a point cloud and $P'$ an $\varepsilon$-perturbation of $P$ with the same Delaunay triangulation with respect to the perturbation pairing; call it $K$. Let $f_{P}, f_{P'}: K \to \R$ be monotonic functions defined by assigning the Delaunay-Rips scales to the simplices of $K$ as viewed in $P$ and $P'$, respectively. For each dimension $p$, the bottleneck distance between the persistence diagrams is bounded from above by twice the Hausdorff distance between the point clouds $P$ and $P'$:
        \[W_\infty\left(\text{Dgm}_p(f_{P}), \text{Dgm}_p(f_{P'})\right) \leq 2d_H(P,P').\]
    \end{thm}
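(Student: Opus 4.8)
The plan is to reduce the claim to the simplex-wise stability result of Theorem \ref{thm:edelstabilitythm}. That theorem already gives $W_\infty\left(\text{Dgm}_p(f_P), \text{Dgm}_p(f_{P'})\right) \leq \|f_P - f_{P'}\|_\infty$, but only because $f_P$ and $f_{P'}$ are monotonic functions on \emph{one and the same} finite complex $K$. This is exactly the role played by the hypothesis that $\text{Del}(P)$ and $\text{Del}(P')$ agree under the perturbation pairing: it supplies a canonical identification of the simplices of the two Delaunay triangulations, so $f_P$ and $f_{P'}$ may legitimately be compared as functions on the common domain $K$. It therefore suffices to prove the elementary geometric estimate $\max_{\sigma \in K} |f_P(\sigma) - f_{P'}(\sigma)| \leq 2\, d_H(P,P')$, and then compose the two inequalities.

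First I would record the explicit form of the DR scale. From Definition \ref{defn:del-rips_complex}, a simplex $\sigma$ enters $\text{DR}_\varepsilon$ precisely when $2\varepsilon$ accommodates every pairwise distance among its vertices; using the convention of Algorithm \ref{alg:one}, in which each edge is assigned its full Euclidean length, this means $f_P(\sigma) = \max_{u,v \in \sigma} d(p_u, p_v)$ (the diameter of $\sigma$ as realized in $P$), with the analogous formula for $f_{P'}$ in terms of the perturbed points, and $f_P(\sigma)=f_{P'}(\sigma)=0$ on $0$-simplices so that only edges and higher simplices need be considered. Fix such a simplex $\sigma$ and let $u,v$ be the pair of vertices attaining the diameter in $P$, so $f_P(\sigma) = d(p_u, p_v)$. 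The key step is a triangle-inequality sandwich: inserting the perturbation partners $p_u'$ and $p_v'$ gives $d(p_u, p_v) \leq d(p_u, p_u') + d(p_u', p_v') + d(p_v', p_v)$.

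Now I would bound the three terms. By Lemma \ref{lem:hausdorffequaltopointdist}, every perturbation pair lies within $d_H(P,P')$ of one another, so the two endpoint terms satisfy $d(p_u,p_u') \leq d_H(P,P')$ and $d(p_v,p_v') \leq d_H(P,P')$; the middle term $d(p_u', p_v')$ is at most the diameter of $\sigma$ realized in $P'$, i.e. $f_{P'}(\sigma)$. Combining these yields $f_P(\sigma) \leq f_{P'}(\sigma) + 2\,d_H(P,P')$, and running the identical argument starting from the diameter-attaining pair in $P'$ gives the reverse inequality, whence $|f_P(\sigma) - f_{P'}(\sigma)| \leq 2\,d_H(P,P')$ for every $\sigma \in K$. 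Taking the maximum over $\sigma$ bounds $\|f_P - f_{P'}\|_\infty$ by $2\,d_H(P,P')$, and Theorem \ref{thm:edelstabilitythm} then delivers the stated bottleneck bound.

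I do not anticipate a genuine obstacle in the estimate itself, which is routine once the setup is fixed. The one point demanding care is conceptual rather than computational: justifying that $f_P$ and $f_{P'}$ truly share the domain $K$, which is where the hypothesis that the Delaunay triangulations coincide under the perturbation pairing is indispensable — without it Theorem \ref{thm:edelstabilitythm} simply does not apply, and (as the instability discussion to follow shows) the conclusion can fail. A secondary bookkeeping matter is to confirm that the factor of $2$ is tied to the convention that the DR scale of an edge is its full length $d(x,x')$ as in Algorithm \ref{alg:one}; under the half-length convention $\varepsilon = \tfrac{1}{2}\,\text{diam}(\sigma)$ of Definition \ref{defn:del-rips_complex} the same sandwich argument would produce the sharper constant $d_H(P,P')$, so the $2\,d_H(P,P')$ bound is in any case safe.
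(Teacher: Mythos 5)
Your proposal is correct and follows essentially the same route as the paper: identify $f_P$ and $f_{P'}$ on the common complex $K$ via the perturbation pairing, bound $\|f_P - f_{P'}\|_\infty \leq 2\,d_H(P,P')$ by a triangle-inequality argument on the diameter-realizing vertex pairs (using Lemma \ref{lem:hausdorffequaltopointdist} to control each perturbation pair by $d_H$), and then invoke Theorem \ref{thm:edelstabilitythm}. The only difference is organizational: your symmetric two-sided sandwich, using $d(p_u',p_v') \leq f_{P'}(\sigma)$ and its mirror, cleanly subsumes the paper's explicit case analysis over whether the diameter-attaining pairs in $P$ and $P'$ correspond under the perturbation, and your remark about the edge-length versus half-diameter convention correctly identifies why the factor of $2$ is safe in either case.
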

    \begin{proof}
        We will leverage the result of Theorem \ref{thm:edelstabilitythm}. 
        Since $P$ is a finite point cloud, there will be a simplex $\sigma \in K$ such that
        \[||f_{P}-f_{P'}||_{\infty} = |f_{P}(\sigma) - f_{P'}(\sigma)|.\]
        Since the Delaunay-Rips scale of a simplex is determined by the two points that are the farthest apart in the simplex, there exists $p, q \in P$ and $r', s' \in P'$, all of which are 0-simplices in $\sigma$, such that $f_P(\sigma) = d(p, q)$ and $f_{P'}(\sigma) = d(r',s')$. It could be that $p, q$ and $r', s'$ form two perturbation pairs (e.g., $p'=r'$ and $q'=s'$) but this need not hold in every case. For instance, the perturbation of the points $p$ and $q$ that determine the scale of $\sigma$ in $\Del(P)$ may have moved to $p'$ and $q'$ (both of which are necessarily in $\sigma$ in $\Del(P')$) closer together than the points $r'$ and $s'$ that determine the scale of $\sigma$ in $\Del(P')$  We consider these two cases separately.
        
        Case 1: $p'=r'$ and $q' = s'$. Without loss of generality, assume 
        \begin{equation*}
        d(p,q) > d(p',q').
        \end{equation*} 

        Then 
        \begin{align*}
        |f_{P}(\sigma) - f_{P'}(\sigma)| &= |d(p,q) - d(r',s')| \\
        &= |d(p,q) - d(p',q')|\\
        &= d(p,q) - d(p', q') \\
        &\leq d(p,p') + d(p',q') + d(q',q) - d(q',p')\\
        &= d(p,p') + d(q',q)\\
        &\leq d(x,x') + d(x,x')\\
        &= 2 d_H(P,P'),
        \end{align*}
        where $x \in P$ and $x' \in P'$ are the paired points farthest from one another in $P$ and $P'$ as in Lemma \ref{lem:hausdorffequaltopointdist}. 
        Case 2: $p' \neq r'$ or $q' \neq s'$. As mentioned,
        \begin{equation}\label{eqn:34biggerthan12}
            d(r', s') \geq d(p',q')
        \end{equation}
        \begin{equation}\label{eqn:12biggerthan34}
            d(p, q) \geq d(r,s),
        \end{equation}
        by assumption of which pair of vertices in $\sigma$ determine the scales of in $\Del(P)$ and $\Del(P')$.
        Now, there are two sub-cases to consider.
        
        Case 2a: \(d(p,q) \geq d(r',s')\). Using Equation \ref{eqn:34biggerthan12} we have
        \[|f_{P}(\sigma) - f_{P'}(\sigma)| = |d(p,q) - d(r',s')| = d(p,q) - d(r', s') \leq d(p,q) - d(p', q')\]
        and the rest follows by the same argument as in Case 1.
        
        Case 2b: \(d(p,q) < d(r',s')\). Using Equation \ref{eqn:12biggerthan34} we have
        \[|f_{P}(\sigma) - f_{P'}(\sigma)| = |d(p,q) - d(r',s')| = d(r', s') - d(p,q) \leq d(r',s') - d(r, s)\]
        and the rest follows as in Case 1 with appropriate relabeling.
    
        Finally, by applying Theorem \ref{thm:edelstabilitythm}, we conclude that the bottleneck distance between the persistence diagrams associated with $P$ and $P'$, respectively, is bounded from above by twice the Hausdorff distance between the two point clouds:
        \[W_\infty\left(\text{Dgm}_p(f_{P}), \text{Dgm}_p(f_{P'})\right) \leq ||f_{P} - f_{P'}||_\infty \leq 2d_H(P,P').\]   
    \end{proof}

We see that when the underlying Delaunay triangulation on our point clouds is fixed (with respect to the perturbation pairing), assigning scales to simplices using the DR algorithm guarantees stability of the corresponding persistence diagram. Although it is known that for a point cloud, $P$ in general position, there exists a sufficiently small $\varepsilon > 0$ such that every  $\varepsilon$-perturbation $P'$ will have the same Delaunay triangulation  \cite{boissonnat2013stability}, the size of $\varepsilon$ may be quite small, which casts doubt on the practical utility of the above stability result for real applications in which there is measurement uncertainties.  In the next section, we explore what can happen when the underlying Delaunay triangulation does change as a result of perturbing data points. 

\subsection{Persistence Diagram Instability} \label{instability_section}
The DR construction gains computational efficiency at the cost of stability. We demonstrate a simple, yet clear example of how a discontinuity in the transformation from data to diagram can arise under a perturbation of the underlying data. In Figure \ref{fig:4pt_PDs_transformation}, moving from left to right, we imagine moving the right-most point (in red) to the right towards the unique inscribing circle for the other three points. When the right-most point is inside the circle, there is an $H_1$ class with non-zero persistence. This class disappears immediately when the 4 points lie on the same circle. Informally, this means that DR sees the 4 point form a loop in the first two stages and then immediately loses sight of the loop when the points become cocircular. In the figure, we have marked the Delaunay triangulation of the points to showcase the position of the right-most (red) point at which an edge flip occurs (namely when all four points lie on the same circle). What we are seeing is point clouds that are very similar in structure visually, but are producing very different persistence diagrams. An arbitrarily small perturbation of the right-most point to inside the circle gives a very different persistence diagram from an arbitrarily small perturbation to outside the circle. We now proceed to formally prove the instability of the persistence diagrams associated with this particular configuration of points.

\begin{figure}[ht]
    \centering
    \includegraphics[width=\columnwidth]{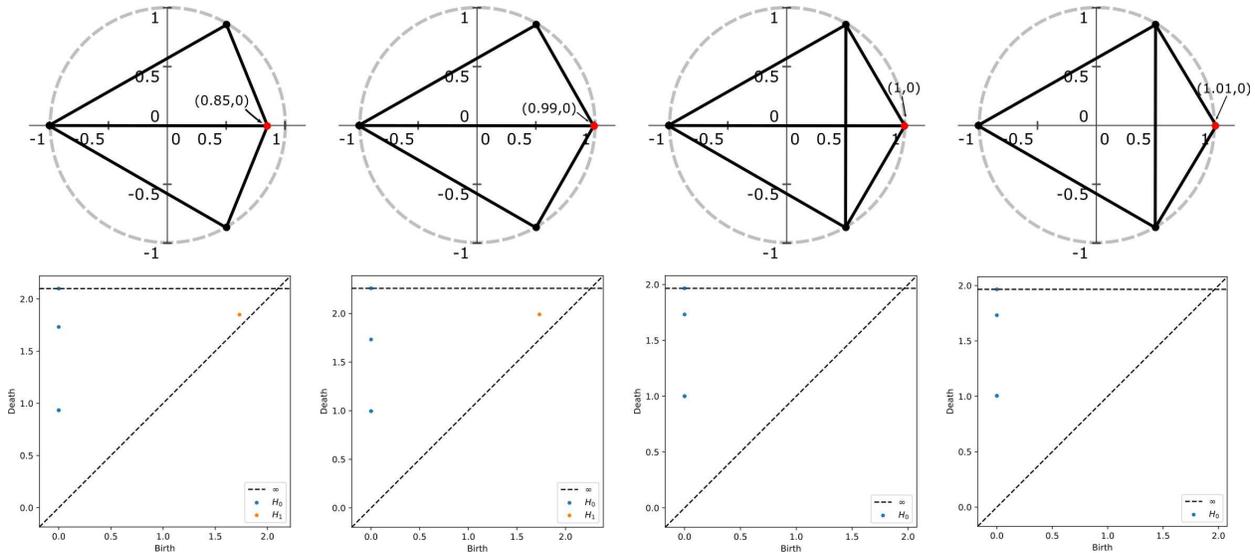}
    \caption{Persistence diagrams of 4 point example as the right-most point moves horizontally to the right on the $x$-axis. Notice that in the first two stages (from the left), there is an $H_1$ class with non-zero persistence in the diagrams. The $H_1$ class disappears in the last two stages.}
    \label{fig:4pt_PDs_transformation}
\end{figure}

\begin{lem}\label{lem:boundary_mat_for_instability}
    Let $P' = \{(-1,0),(\frac{1}{2},\frac{\sqrt{3}}{2}),(\frac{1}{2},-\frac{\sqrt{3}}{2}),(1-x,0)\}$ with $0< x < \delta < 2-\sqrt{3}$. Using the Delaunay-Rips complex to construct a filtration on this point cloud, there is only one $H_1$ homology class with non-zero persistence.
\end{lem}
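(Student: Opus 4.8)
The plan is to pin down the Delaunay triangulation $K$ of $P'$ explicitly, read off the Delaunay-Rips filtration value of every simplex, and then extract persistent $H_1$ either by tracking the first Betti number across the filtration or, as the lemma's name suggests, by reducing the $\Z_2$ boundary matrix. Label the points $A=(-1,0)$, $B=(\frac12,\frac{\sqrt3}{2})$, $C=(\frac12,-\frac{\sqrt3}{2})$, $D=(1-x,0)$. The first observation is that $A,B,C$ are the vertices of an equilateral triangle inscribed in the unit circle centered at the origin, while $D$ sits on the positive $x$-axis at distance $1-x<1$ from the origin. Hence $D$ lies strictly inside that circumcircle but, since $1-x>\frac12$, strictly to the right of the edge $BC$, and therefore outside the triangle $ABC$. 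Consequently the four points are in convex position, forming a quadrilateral whose two diagonals are $BC$ and $AD$.

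The geometric heart of the argument is to decide which diagonal $\Del(P')$ selects. The empty-circumcircle (edge-flip) criterion says the diagonal $BC$ is Delaunay-legal if and only if $D$ lies outside the circumcircle of $ABC$; since that circumcircle is exactly the unit circle and $D$ is strictly interior to it, $BC$ is illegal and must be flipped to $AD$. I would make this rigorous by verifying the emptiness condition directly, exhibiting an empty circle through $A$ and $D$ (equivalently, checking that $B$ lies outside the circumcircle of $ACD$ and $C$ outside that of $ABD$). The upshot is that $K$ has vertices $\{A,B,C,D\}$, edges $\{AB,CA,BD,DC,AD\}$ with $BC\notin K$, and exactly the two triangles $\{ABD,ACD\}$.

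Next I would compute the relevant distances and their order. A direct calculation gives $d(A,B)=d(C,A)=\sqrt3$, $d(B,D)=d(C,D)=\sqrt{1-x+x^2}=:\ell$, and $d(A,D)=2-x$. Because $0<x<1$ forces $\ell<1<\sqrt3$, and the hypothesis $x<\delta<2-\sqrt3$ forces $2-x>\sqrt3$, the edge values satisfy the strict ordering $\ell<\sqrt3<2-x$. Each triangle inherits the value of its longest edge $AD$, so both $ABD$ and $ACD$ enter at scale $2-x$. It is precisely the condition $x<2-\sqrt3$ that guarantees $AD$ and the two triangles appear after the edges $AB,CA$, which is what lets the loop persist. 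Ordering the simplices — the four vertices at $0$; $BD,DC$ at $\ell$; $AB,CA$ at $\sqrt3$; then $AD,ABD,ACD$ at $2-x$ with $AD$ preceding its cofaces — I would track $\beta_1$: no cycle exists until $CA$ closes the loop through $A,B,D,C$ at scale $\sqrt3$ (one $H_1$ birth); the edge $AD$ momentarily opens a second cycle at $2-x$, but the two triangles entering at the same scale annihilate both cycles. Reducing the $\Z_2$ boundary matrix makes the pairing explicit: one triangle pivots on $AD$, pairing the edge $AD$ (born at $2-x$) with death $2-x$, a zero-persistence class, while after a single column operation the other triangle pivots on $CA$, pairing the loop born at $\sqrt3$ with death $2-x$. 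Hence exactly one $H_1$ class, namely $(\sqrt3,\,2-x)$, has persistence $(2-x)-\sqrt3>0$.

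I expect the main obstacle to be the certification of the Delaunay triangulation: showing that the degenerate cocircular configuration at $x=0$ resolves, for every $x>0$ in range, into the diagonal $AD$ rather than $BC$, and that $BC$ genuinely fails to belong to $\Del(P')$. Everything downstream — the distance ordering controlled by $x<2-\sqrt3$ and the boundary-matrix reduction — is routine once the complex $K$ is fixed.
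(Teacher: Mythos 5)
Your proposal is correct and follows essentially the same route as the paper: fix the Delaunay complex (edges $AB, AC, BD, CD, AD$ with triangles $ABD, ACD$), order the simplices by the scales $0 < \sqrt{1-x+x^2} < \sqrt{3} < 2-x$, and reduce the $\Z_2$ boundary matrix to find the pairs $(AD, ABD)$ at $(2-x,2-x)$ and $(AC, ACD)$ at $(\sqrt{3}, 2-x)$. If anything, you are more careful than the paper, which certifies the triangulation only by a figure rather than by the empty-circumcircle argument you supply.
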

\begin{proof}
We have $P' = \{(-1,0),(\frac{1}{2},\frac{\sqrt{3}}{2}),(\frac{1}{2},-\frac{\sqrt{3}}{2}),(1-x,0)\}$ with $0< x < \delta < 2-\sqrt{3}$.
Our filtration has 4 key scale values, $t = 0<\sqrt{1-x+x^2}< \sqrt{3}< 2-\delta$ as shown in Figure \ref{fig:4pt_filtration}. 

\begin{figure}[ht]
    \centering
    \includegraphics[width=\columnwidth]{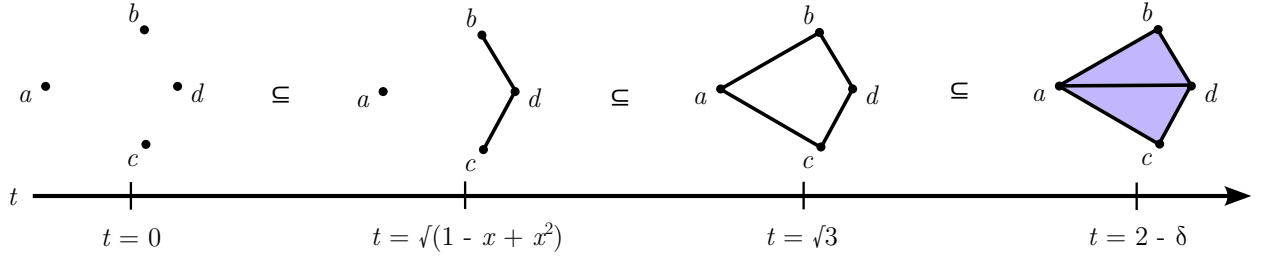}
    \caption{Filtration with 4 key scale values where $0<x<\delta<2-\sqrt{3}.$ }
    \label{fig:4pt_filtration}
\end{figure}

We construct a boundary matrix $B$ with entries from the field $\Z_2$ and reduce it to $\overline{B}$ using the standard reduction algorithm found in Chapter VII of \cite{Edelsbrunner}, which computes the pairing of simplices which respectively give rise to and kill off persistent homology classes in the $H_0$ and $H_1$ persistence diagrams. The ordering of the columns of $B$ is determined by the ordering of the simplices given by the DR filtration, while ensuring each simplex appears after its faces. 

\begin{equation*}
\begin{split}
B=\bordermatrix{ 
    &   a & b & c & d & bd & cd & ab & ac & ad & abd & acd\cr
    a & 0 & 0 & 0 & 0 &  0 &  0 &  1 &  1 &  1 &   0 &   0 \cr
    b & 0 & 0 & 0 & 0 &  1 &  0 &  1 &  0 &  0 &   0 &   0\cr
    c & 0 & 0 & 0 & 0 &  0 &  1 &  0 &  1 &  0 &   0 &   0\cr
    d & 0 & 0 & 0 & 0 &  1 &  1 &  0 &  0 &  1 &   0 &   0\cr
   bd & 0 & 0 & 0 & 0 &  0 &  0 &  0 &  0 &  0 &   1 &   0\cr
   cd & 0 & 0 & 0 & 0 &  0 &  0 &  0 &  0 &  0 &   0 &   1\cr
   ab & 0 & 0 & 0 & 0 &  0 &  0 &  0 &  0 &  0 &   1 &   0\cr
   ac & 0 & 0 & 0 & 0 &  0 &  0 &  0 &  0 &  0 &   0 &   1\cr
   ad & 0 & 0 & 0 & 0 &  0 &  0 &  0 &  0 &  0 &   1 &   1\cr
  abd & 0 & 0 & 0 & 0 &  0 &  0 &  0 &  0 &  0 &   0 &   0\cr
  acd & 0 & 0 & 0 & 0 &  0 &  0 &  0 &  0 &  0 &   0 &   0\cr
    }\\
\overline{B}=\bordermatrix{ 
    &   a & b & c & d & bd & cd & ab & ac & ad & abd & acd\cr
    a & 0 & 0 & 0 & 0 &  0 &  0 &  1 &  0 &  0 &   0 &   0 \cr
    b & 0 & 0 & 0 & 0 &  1 &  1 &  1 &  0 &  0 &   0 &   0\cr
    c & 0 & 0 & 0 & 0 &  0 &  1 &  0 &  0 &  0 &   0 &   0\cr
    d & 0 & 0 & 0 & 0 &  1 &  0 &  0 &  0 &  0 &   0 &   0\cr
   bd & 0 & 0 & 0 & 0 &  0 &  0 &  0 &  0 &  0 &   1 &   1\cr
   cd & 0 & 0 & 0 & 0 &  0 &  0 &  0 &  0 &  0 &   0 &   1\cr
   ab & 0 & 0 & 0 & 0 &  0 &  0 &  0 &  0 &  0 &   1 &   1\cr
   ac & 0 & 0 & 0 & 0 &  0 &  0 &  0 &  0 &  0 &   0 &   1\cr
   ad & 0 & 0 & 0 & 0 &  0 &  0 &  0 &  0 &  0 &   1 &   0\cr
  abd & 0 & 0 & 0 & 0 &  0 &  0 &  0 &  0 &  0 &   0 &   0\cr
  acd & 0 & 0 & 0 & 0 &  0 &  0 &  0 &  0 &  0 &   0 &   0\cr
    }.
\end{split}
\end{equation*}
By computing the scales of each simplex, the persistence pairs for the $H_0$ class with their persistence diagram coordinate (birth/death pair) are found to be
$$(a,N/A): (0,\infty)$$
$$(b,ab): (0,\sqrt{3})$$
$$(c, cd): (0,\sqrt{1-x+x^2})$$
$$(d, bd): (0,\sqrt{1-x+x^2}).$$
Likewise the $H_1$ persistence pairs are
$$(ad, abd): (2-x, 2-x)$$
and
$$(ac, acd): (\sqrt{3}, 2-x).$$
The only $H_1$ class with non-zero persistence is $(\sqrt{3}, 2-x).$
\end{proof}

\begin{thm}
    Let ($\mathcal{P}$, $d_{H}$) be the space of point clouds equipped with the Hausdorff metric and let ($\mathcal{D}$, $W_\infty$) be the space of persistence diagrams equipped with the bottleneck metric. Let
    $$\text{Pers}_1: \mathcal{P} \to \mathcal{D}$$
    where $\text{Pers}_1(P)$ is the persistence diagram of the $H_1$ classes of the point cloud $P$ constructed using the Delaunay-Rips complex. This map is discontinuous.
\end{thm}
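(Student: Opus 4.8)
The plan is to prove discontinuity in the standard way for a map between metric spaces: I will exhibit a single point cloud $P_0 \in \mathcal{P}$ and two sequences converging to it in the Hausdorff metric whose images under $\text{Pers}_1$ cannot both converge to the same diagram. The base point is the degenerate cocircular configuration $P_0 = \{(-1,0),(\frac{1}{2},\frac{\sqrt{3}}{2}),(\frac{1}{2},-\frac{\sqrt{3}}{2}),(1,0)\}$, whose four points lie on the unit circle. I would approach it from inside the circle using the family of Lemma \ref{lem:boundary_mat_for_instability}, taking $P_n$ to be that configuration with $x = 1/n$ (fourth point at $(1-\frac{1}{n},0)$), and from outside using $Q_n$ with fourth point at $(1+\frac{1}{n},0)$. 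Both families converge to $P_0$, since only the fourth point moves and it moves a distance $1/n$, so by the finite-point-cloud characterization of Hausdorff distance (cf. Lemma \ref{lem:hausdorffequaltopointdist}) we have $d_H(P_n,P_0) = d_H(Q_n,P_0) = 1/n \to 0$.

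The first input is immediate from Lemma \ref{lem:boundary_mat_for_instability}: for every $n$ with $1/n < 2 - \sqrt{3}$, the diagram $\text{Pers}_1(P_n)$ consists of the single off-diagonal point $(\sqrt{3},\, 2 - \frac{1}{n})$, whose persistence $2 - \frac{1}{n} - \sqrt{3}$ is strictly positive and tends to $2 - \sqrt{3} > 0$. The second input, which I would establish by a reduced-boundary-matrix computation parallel to the one in Lemma \ref{lem:boundary_mat_for_instability}, is that $\text{Pers}_1(Q_n)$ has no off-diagonal class. Outside the circle the Delaunay edge $ad$ is flipped to $bc$, so the two triangles $abc$ and $bcd$ enter the filtration at scale $\sqrt{3}$, exactly the scale at which the last of their boundary edges $ab,ac,bc$ appears; every $1$-cycle is therefore filled at the instant it is born, and no $H_1$ class has positive persistence.

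With these two computations, the contradiction with continuity follows from the triangle inequality for $W_\infty$. Since $\text{Pers}_1(Q_n)$ is empty off the diagonal, any bijection realizing the bottleneck distance to $\text{Pers}_1(P_n)$ must send the point $(\sqrt{3}, 2 - \frac{1}{n})$ to the diagonal, incurring a cost equal to its $\|\cdot\|_\infty$-distance to the diagonal, namely $\frac{1}{2}(2 - \frac{1}{n} - \sqrt{3})$. Hence $W_\infty(\text{Pers}_1(P_n), \text{Pers}_1(Q_n)) \geq \frac{1}{2}(2 - \frac{1}{n} - \sqrt{3}) \to \frac{1}{2}(2 - \sqrt{3}) > 0$. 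If $\text{Pers}_1$ were continuous at $P_0$, then both $\text{Pers}_1(P_n) \to \text{Pers}_1(P_0)$ and $\text{Pers}_1(Q_n) \to \text{Pers}_1(P_0)$, forcing $W_\infty(\text{Pers}_1(P_n), \text{Pers}_1(Q_n)) \to 0$ by the triangle inequality, a contradiction.

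The main obstacle is the genuine non-uniqueness of the Delaunay triangulation exactly at the cocircular configuration $P_0$, where either the edge $ad$ or the edge $bc$ may be chosen: under the first resolution $\text{Pers}_1(P_0)$ carries the point $(\sqrt{3},2)$, while under the second it is empty, so the value of the map at $P_0$ depends on a tie-breaking convention. The two-sided argument above is precisely what circumvents this difficulty, since the triangle-inequality contradiction never evaluates the map at the degenerate point and therefore holds regardless of how ties are broken; one of the two sequences must witness the discontinuity in every case. I would therefore present the two-sided comparison as the core of the proof, with the single-sided picture of Figure \ref{fig:4pt_PDs_transformation} serving as motivation.
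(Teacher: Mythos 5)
Your argument is correct and rests on the same witness as the paper's --- the cocircular four-point configuration together with the inward perturbation family analyzed in Lemma \ref{lem:boundary_mat_for_instability} --- but its logical structure differs in one genuine way. The paper evaluates $\text{Pers}_1$ at the degenerate cloud $P_0$ itself: under the paper's definition of $\Del$ via nonempty intersections of Voronoi cells, all four cells meet at the circumcenter, so $\Del(P_0)$ contains \emph{both} diagonals (indeed the full $3$-simplex); the triangles $abc$ and $bcd$ then enter at scale $\sqrt{3}$ and the outer cycle dies at birth, and the discontinuity is exhibited one-sidedly by comparing this empty diagram against the point $(\sqrt{3},2-x)$ of the inward perturbation. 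You instead never evaluate the map at $P_0$, comparing the inward family $P_n$ against an outward family $Q_n$ and deriving the contradiction from the triangle inequality. What this buys is exactly what you claim: immunity to any tie-breaking convention at the degenerate configuration, so the proof survives even if $\Del$ is required to return an honest triangulation there (though note that under the paper's set-valued definition the value at $P_0$ is in fact canonical, so the obstacle you flag is milder than it appears). What it costs is a second boundary-matrix computation, for $Q_n$, which you only sketch; the sketch is sound --- for small $1/n$ the flipped diagonal $bc$ and the edges $ab$, $ac$ all have length $\sqrt{3}$ while $bd$ and $cd$ are strictly shorter, so the Delaunay triangles $abc$ and $bcd$ appear at scale $\sqrt{3}$, the birth scale of the cycle --- but it would need to be written out at the level of detail of Lemma \ref{lem:boundary_mat_for_instability} to match the paper's standard of rigor. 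Your lower bound $\tfrac{1}{2}\bigl(2-\tfrac{1}{n}-\sqrt{3}\bigr)$ correctly uses the sup-norm distance from $(\sqrt{3},\,2-\tfrac{1}{n})$ to the diagonal and converges to $\tfrac{1}{2}(2-\sqrt{3})>0$, which suffices for the contradiction.
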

\begin{proof}
    Let $P \in \mathcal{P}$ be $P = \{(-1,0),(\frac{1}{2},\frac{\sqrt{3}}{2}),(\frac{1}{2},-\frac{\sqrt{3}}{2}),(1,0)\}.$ Note that the points all lie on the unit circle, so the Delaunay 1-skeleton has an edge between every pair of points (See the third frame in Figure \ref{fig:4pt_PDs_transformation}). For this configuration of points, the vertical edge between the points $(\frac{1}{2}, \frac{\sqrt{3}}{2})$ and $(\frac{1}{2},-\frac{\sqrt{3}}{2})$ appears at the exact same scale value as the cycle formed by all four points. Therefore in the DR filtration, the $H_1$ class whose boundary is the four outer edges dies at the same time as it is born.

    Fix $\varepsilon=0.1$. We now show that for any $\delta > 0$, there exists $P' \in \mathcal{P}$ such that $d_{H}(P,P')< \delta$, but $W_\infty\left(\text{Pers}_1(P), \text{Pers}_1(P')\right) \geq \varepsilon.$ Take $P' = \{(-1,0),(\frac{1}{2},\frac{\sqrt{3}}{2}),(\frac{1}{2},-\frac{\sqrt{3}}{2}),(1-x,0)\}$ with $0<x < \min\{\delta, \frac{2-\sqrt{3}}{2}\}$. This is a small perturbation of $P$ gotten by pushing the point $(1,0)$ inside the unit circle, thereby putting the points in general position (See the second frame in Figure \ref{fig:4pt_PDs_transformation} for an example). It is straightforward to compute the Hausdorff distance $d_H$ between $P$ and $P'$ as
    $$d_H(P,P')=x<\delta.$$
    Recall that $\text{Pers}_1(P)$ has no $H_1$ class with non-zero persistence. Thus, to compute $W_\infty\left(\text{Pers}_1(P),\text{Pers}_1(P')\right)$, we must match the $H_1$ class of $\text{Pers}_1(P')$ with the diagonal. The $H_1$ class of $\text{Pers}_1(P')$ has birth $\sqrt{3}$ and death $2-x$ as calculated in Lemma \ref{lem:boundary_mat_for_instability}. Using the max norm, we find
    $$W_\infty(\text{Pers}_1(P),\text{Pers}_1(P')) = (2-x)-(\sqrt{3}) \geq 2-\frac{2-\sqrt{3}}{2} -\sqrt{3} \geq 0.1 = \varepsilon.$$
    So $\text{Pers}_1$ is discontinuous at $P$.
\end{proof}

As a remark, note that any metric on the space of point clouds that is bounded above by the Hausdorff distance will produce this discontinuity (for example, the Gromov-Hausdorff distance). This gives us insight into when the DR construction of the persistence diagram may experience an instability---namely when points are not in general position.         

Thus far, we have mathematically shown theoretical stability of the persistence diagram of a data set whose Delaunay triangulation does not change under the influence of a perturbation of the point locations. However, it should not be expected that in real-world applications the conditions guaranteeing stability will be met, and we have shown in this section that when the underlying Delaunay triangulation does change, the degree of change between diagrams may not be controlled by the degree of change in the underlying data. We are thus led to ask, to what extent does such an instability matter in practice?

\section{Machine Learning Model Performance using Rips, Alpha, and Delaunay-Rips Filtrations}
\label{sec:ml-comparisons}
    Although we have special cases where instability in the PD may arise (as shown in Section \ref{instability_section}), we show here that this instability may have little impact in applications. We demonstrate the robustness of DR for machine learning in a synthetic data context and a real data context. For the synthetic data, we work with random forest classifiers for a multi-class classification task. For the real data, we train support vector machines with linear kernel for a binary classification task. 

\subsection{Classification of Synthetic Shape Data}
\label{sec:ml-comparisons-synthetic}
    To test the robustness of DR to changes in the degree of perturbation to the locations of points in point cloud data, we develop ML classification models using Rips, Alpha, and DR filtrations on point clouds generated by randomly sampling various manifold and adding random noise to perturb the points. Although the persistence diagrams produced using the DR filtration enjoy stability when the underlying Delaunay triangulation is unchanged (see Section \ref{sec:stability}), in reality, the underlying Delaunay triangulation of the point is expected to change even for modest levels of noise.

    Figure \ref{fig:6shapeclasses} shows the general pipeline for our experiment. We generated 100 point clouds consisting of 500 data points for each of 6 shape classes (circle, sphere, torus, random, three clusters, and three clusters within three clusters) as subsets of $\R^3$. For a fixed level of noise $\nu$, points in each cloud were perturbed by randomly chosen vectors from the ambient space with maximum magnitude equal to $\nu$. For each cloud, we computed the 0, 1, and 2-persistence diagrams using Rips, Alpha, and DR filtrations and then vectorized the resulting diagrams using PIs from the Python package ``persim'' in the scikit-tda library \cite{persim}.  The resulting feature vectors were then used to train a random forest classifier using the implementation in scikit-learn \cite{scikit-learn}. To evaluate and compare the different filtrations, we computed the median classification accuracy of the trained models on held out data using 10-fold cross validation. 
    
    \begin{figure}[ht]
    \centering
    \includegraphics[width=\columnwidth]{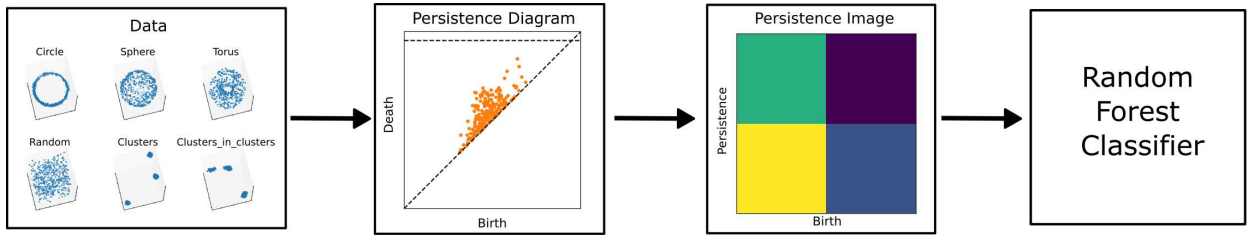}
    \caption{The first step is to generate data points based on 6 shape classes. The next step is to compute the persistence diagram for each data set for each dimension up to 3 (the image above is the $H_1$ persistence diagram for 500 random points in 3 dimensions.) Then, the diagrams are turned into PIs with a $2\times 2$ resolution grid. Finally, we flatten the image into a vector and train a random forest classifier to distinguish the 6 shape classes.}
    \label{fig:6shapeclasses}
    \end{figure}
    
    Fixing a homology class and fixing a noise level $\nu$ for the perturbation of the point cloud, we determine the birth and persistence ranges of persistence pairs produced over all filtration methods and over all samples, and we segment this region into $2\times2$ resolution persistence images. For example, we iterate through all $H_1$ persistence diagrams for all three of our filtration methods that were produced using $\nu=0.20$. Then, we find the maximum birth range and persistence range and use those values to set a $2\times2$ resolution grid to produce the PIs. The purpose of doing this was to ensure that if we compared pixels of the PIs corresponding to different filtration methods, we would make a fair comparison because corresponding persistence pairs would land in corresponding pixels (we leverage this design in Figure \ref{fig:heatmaps} when comparing feature importance).
    
    Figure \ref{fig:accuracy_v_noise} shows the median accuracy of our model for increasing noise levels (we plotted box-whisker plots to show the spread of the accuracy from 10-fold cross validation). As a baseline comparison, note that if our ML model was randomly classifying the test data, we expect to see accuracy of $1/6 \approx 17\%$. Since we are seeing over $70\%$ median accuracy for each noise level, our model truly is finding distinguishing features between the 6 shape classes.
    
    Notably, as is evident in Figure \ref{fig:accuracy_v_noise}, the degradation of ML model performance with increasing noise levels is very similar between DR, Alpha, and Rips filtrations. We do note that the median classification accuracy using Alpha filtrations to generate persistence diagrams is slightly better than the other methods across all noise levels. However, except for a single noise level (.15), all median accuracies are within the ranges of the other two methods. 
    
    \begin{figure}[ht]
        \centering
        \includegraphics[width=\columnwidth]{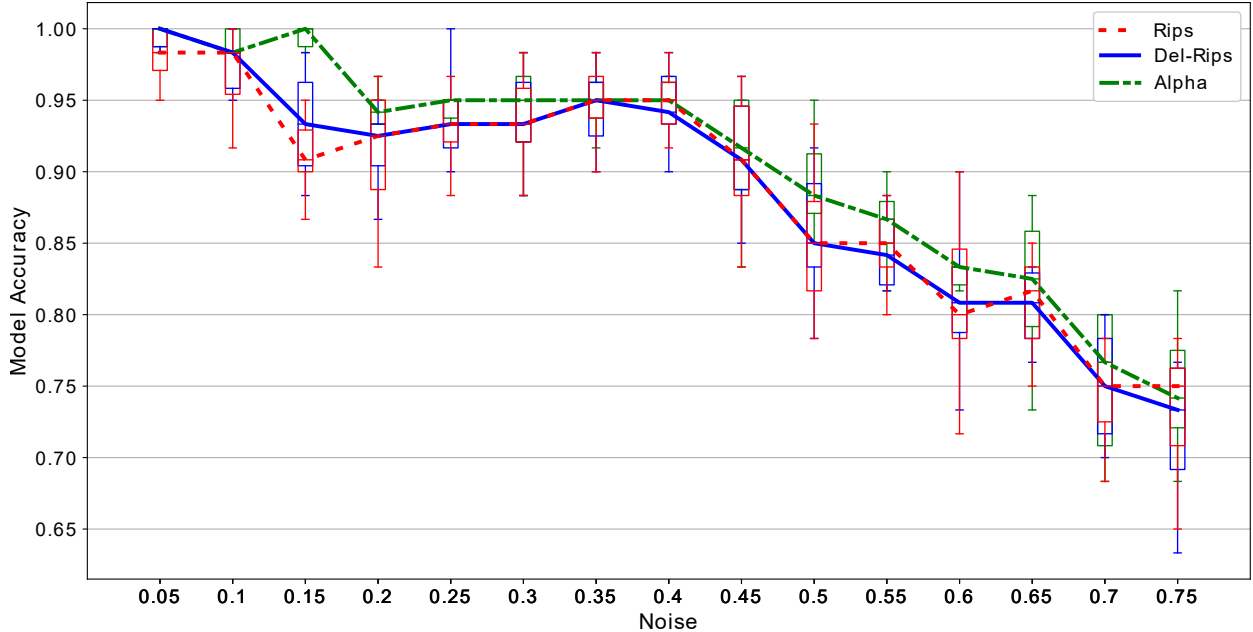}
        \caption{A plot of the median random forest classification accuracy computed on Rips, DR, and Alpha based persistence diagrams across varying noise levels for the original point clouds. The box-and-whisker plots for each noise level show the range of the accuracy across 10-fold cross validation.}
        \label{fig:accuracy_v_noise}
    \end{figure}
    
    In addition to model accuracy, an important consideration is which features are being used to achieve said accuracy. Identifying salient features provides a level of model interpretability \cite{ribeiro2016should}, and may guide the modeller to methods to reduce the dimension of the input, which may further improve model performance. To further compare the three filtration methods, we compared the most important topological features learned during training as determined by a random forest classifier. We began with the PDs produced from the $\nu=0.20$ noisy data. We generated PIs with the following resolutions for each persistence diagram:
    \begin{itemize}
        \item $H_0$ diagram resolution: $5 \times 1$
        \item $H_1$ diagram resolution: $5 \times 5$
        \item $H_2$ diagram resolution: $5 \times 5$
    \end{itemize}
    
   As a result, we produced 55-dimensional feature vectors for each of the 600 samples (100 samples each of 6 classses). We trained a random forest classifier with the same parameters as before, once using a train-test split of 70-30, and used the built-in assessment \cite{scikit-learn} of the Gini importance to quantify feature importance. The Gini importance of a feature is calculated as the amount of reduction to the Gini index \cite{Krzywinski2017PointsOS} brought by that feature. Thus, the higher the Gini importance, the more important the feature is for making classification decisions. Figure \ref{fig:heatmaps} shows heatmaps indicating feature importance. Notice how for the different filtrations we used to obtain the PDs, the PIs generated have similar corresponding pixels that the ML model found important. For example, among the 5+25+25 pixel feature space for the DR-based random forest classifier, the most important features were the bottom-left (dark blue) $H_2$ pixel with Gini importance 0.19 and the bottom $H_0$ pixel with Gini importance 0.17. These were in similar regions to the most important features for Alpha and Rips based classifiers. Hence, we have reason to believe that regardless of the filtration method used, the ML model learned the importance of similar features which are distinguishing between the shapes.

    \begin{figure}[ht]
        \centering
        \includegraphics[width=\columnwidth]{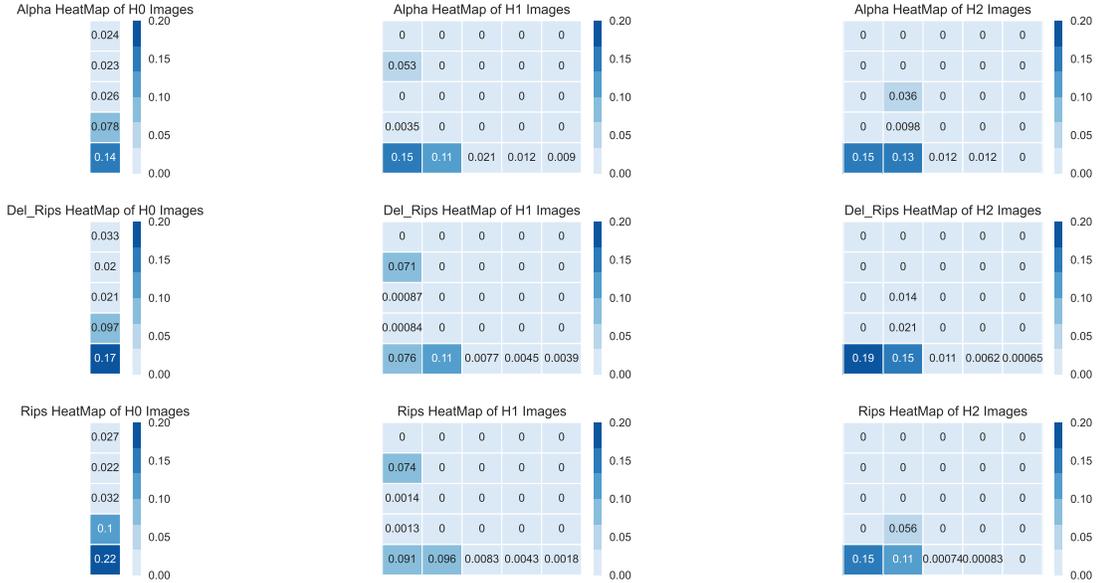}
        \caption{Heatmaps of $H_0$, $H_1$, and $H_2$ persistence diagram PI vectorizations. We took the 55 features for training the random forest classifier based on each filtration function and found their importance in their corresponding classifier models. The number in each pixel is the Gini importance of that pixel in our random forest's classification decisions. The higher the number, the more important the pixel.}
        \label{fig:heatmaps}
    \end{figure}

Our data analysis code is well-documented for reproducibility in our online repository \cite{synthetic_shape_classification}. 

\subsection{Classification of Sleep State}
\label{sec:ml-comparisons-real}
    We next investigate the applicability of DR to a classification problem involving biophysical data. The goal is to showcase the comparable effectiveness of DR with Alpha and Rips filtrations in terms of model performance metrics on a problem in which computational efficiency may be a relevant constraint. Our application is a reanalysis of the data and methodology employed in \cite{Chung_frontiers_hr_ml}, in which the authors develop an ML model to classify sleep stages of a participant using observed instantaneous heart rate (IHR) time series collected using EKG. A high-level visualization of the data-to-model development pipeline we implement is provided in Figure \ref{fig:ml_real_data_pipeline}. As a note, the authors of this manuscript (that investigates the properties and usage of the DR filtration) did not work directly with the data; instead, the processed data was obtained with permission from the authors of \cite{Chung_frontiers_hr_ml}.

    \begin{figure}[ht]
        \centering
        \includegraphics[width=\columnwidth]{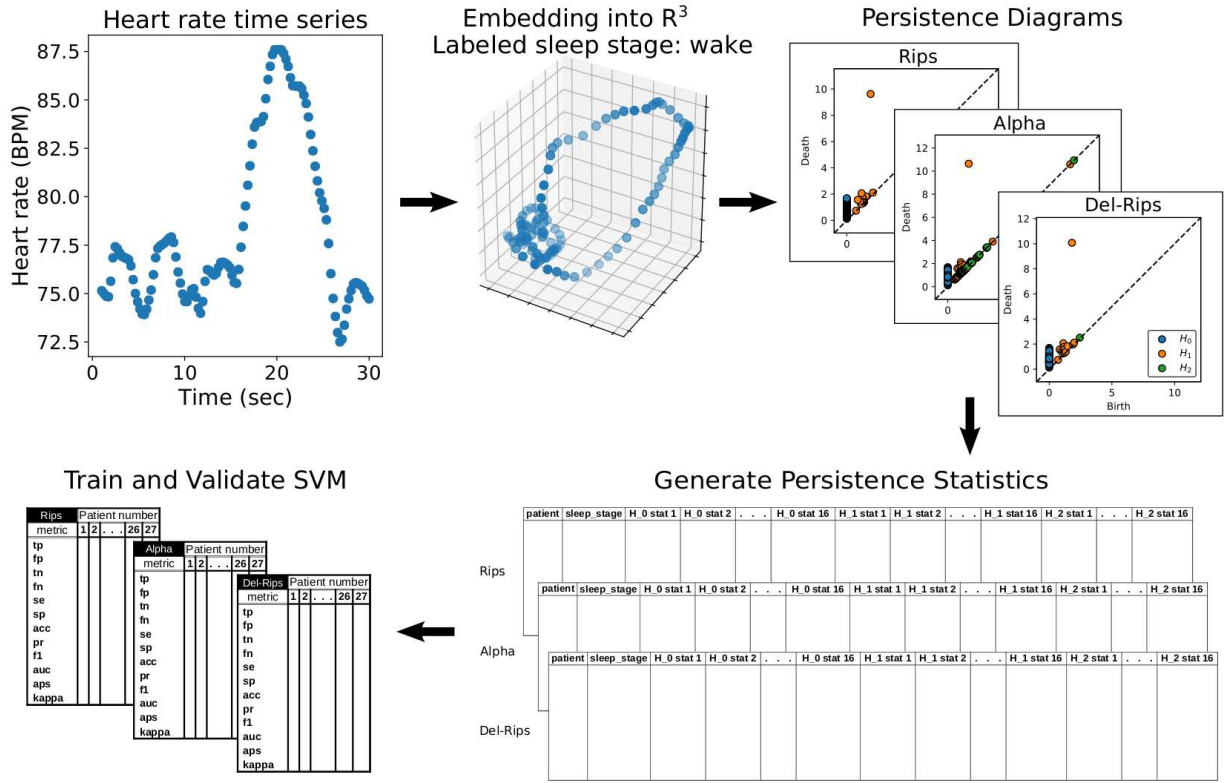}
        \caption{A flowchart of the data-to-model pipeline used. Starting in the upper-left, an example of a 30 second wake epoch followed by its delay embedding into $\R^3$, the corresponding persistence diagrams computed from DR, Alpha, and Rips filtrations on the cloud, and finally the subsequent vectorizations using persistence statistics which are used to train and validate a support vector machine binary classification model.}
        \label{fig:ml_real_data_pipeline}
    \end{figure}
    
    The authors of \cite{Chung_frontiers_hr_ml} trained a machine learning model to predict sleep state, trained on statistics of topological features of high dimensional point clouds that were built from delay embeddings of IHR time series. This approach is premised on the idea that delay embeddings of observed time series can recover dynamical features of underlying attractors in an unknown state space as implied by the well known theorem of Takens \cite{Takens}. The topological/geometric features of these attractors (as represented by PDs) may be discriminating between dynamic states. Applications based on similar rationale have also been found in contexts such as 3D motion capture data \cite{3d_motion_capture}. 

    The data on which models were trained and tested comprised 90 ECG recordings from Chang Gung Memorial Hospital (CGMH) as the CGMH-training database. The paper that originally used this data is \cite{malik2018sleep} and it provides more description into the data and data acquisition.
    
    Each recording was sampled at 200 Hz and each 30s epoch was manually annotated into either ``awake'' or ``asleep'' states. Sleep states were further categorized into one of 5 different sleep stages: stage 1 through 4 or REM sleep, giving a potential for a 6-class classification problem. To each recording a standard R-peak detection algorithm with a 5-beat median filter to remove artifacts was applied (similarly done in \cite{Xu2001AutomaticDO}). From this preprocessed data, IHR time series were computed at 4 Hz. For efficiency, we performed delay embeddings of the IHR time series into $\R^3$, choosing a delay parameter between consecutive coordinates of 5 (i.e., our delay vectors consisted of points $(x_{i}, x_{i+5}, x_{i+10}) \in \R^3$, taking every 5th point in the time series as the coordinate for our embedded points.) The number of indices between the starting coordinate of consecutive delay vectors (i.e., the stride) was set to 1 so as to include all timepoints in at least one delay vector. 
    
    After embedding each epoch into $\R^3$, we computed $H_0, H_1$, and $H_2$ persistence diagrams using the DR, Rips, and Alpha filtrations of the embedded point clouds. Following \cite{Chung_frontiers_hr_ml}, the resulting PDs were converted to fixed-length feature vectors by computing sample statistics of the persistence pairs in the diagrams. In particular, for each homological dimension $p=0,1,2$, we construct two sets: $M_p$, the set of means of all persistence pairs and $L_p$, the persistence of each pair (the death minus the birth). For each set, we calculate the mean, standard deviation, skew, kurtosis, 25th percentile, 50th percentile, 75th percentile, and persistent entropy \cite{chintakunta2015entropy}. The result was 16 persistence statistics corresponding to each dimension-$p$ persistence diagram. NaN was assigned to empty $H_p$ diagrams and these samples were later dropped from the training set. In total, from each sample a total of 48 persistence statistics (16 for each dimension $p=0,1,2$) were computed, resulting in a 48 dimensional feature space on which an SVM model was trained.

    A total of 67188 48-dimensional topological feature vectors (one for each 30 second epoch) were used to train a support vector machine with a linear kernel and balanced class weights to account for imbalance in the numbers of sleep and wake samples. In total 3 models were trained, one for each filtration method. The trained models were each validated on 27 held-out participants in the CGMH-validation by computing sensitivity (se), specificity (sp), accuracy (acc), precision (pr), F1 score (f1), AUC-score (auc), average precision score (aps), and kappa coefficient (kappa) for the epochs of each validation participant. To assess the statistical significance of any differences in performance metrics between DR and either Rips or Alpha filtrations, we applied Mood's median test from SciPy \cite{2020SciPy-NMeth} to the distributions of each performance metric across the 27 validation participants, comparing separately Rips to DR and Alpha to DR filtrations. In this context, Mood's median test tests the null hypothesis that the median of each performance metric is the same.
    
Our model construction differs from the one in the original paper in several key ways.  First, the authors of \cite{Chung_frontiers_hr_ml} derived topological features from considerably high dimensional (120) delay embeddings of the IHR time series. Such high dimensional embeddings pose a challenge for Alpha and DR filtrations since they both require computing Delaunay triangulations, which do not scale efficiently to high dimensions. Furthermore, the original model was trained on statistics derived only from $H_0$ and $H_1$ diagrams, while we included $H_2$ diagrams as well. We note that these changes come at a modest reduction in model performance compared to what is reported in \cite{Chung_frontiers_hr_ml}, although our goal was to assess differences between filtrations and not to improve over previously published classification models.

    \begin{table}
    \begin{center}
    \begin{tabular}{l|cc|cc|cc|}
     (a) & \multicolumn{2}{|c|}{Rips} & \multicolumn{2}{|c|}{Alpha} &
     \multicolumn{2}{|c|}{Del-Rips} \\
     \hline
      & median & iqr & median & iqr & median & iqr  \\
    se    &  0.565217 &  0.428794   &  0.569892 &  0.380411   &  0.555556 &  0.433198 \\
    sp    &  0.925065 &  0.271714   &  0.917511 &  0.270439   &  0.917271 &  0.272863 \\
    acc   &  0.852941 &  0.161052   &  0.848276 &  0.154095   &  0.852573 &  0.161564 \\
    pr    &  0.439636 &  0.316970   &  0.448718 &  0.344648   &  0.424307 &  0.301726 \\
    f1    &  0.517986 &  0.140499   &  0.489655 &  0.136024   &  0.472727 &  0.170183 \\
    auc   &  0.866199 &  0.083241   &  0.864330 &  0.079358   &  0.866646 &  0.096021 \\
    aps   &  0.594109 &  0.180105   &  0.555916 &  0.160430   &  0.552859 &  0.159507 \\
    kappa &  0.393895 &  0.178491   &  0.368380 &  0.197346   &  0.356587 &  0.192436 \\
    \end{tabular}
    \end{center}
    \begin{center}
    \begin{tabular}{l|c|c|}
     (b) & p-value for Rips vs Del-Rips & p-value for Alpha vs Del-Rips \\
     \hline
    se    &  1.000000 &   1.0 \\
    sp    &  1.000000 &   1.0 \\
    acc   &  1.000000 &   1.0 \\
    pr    &  1.000000 &   1.0 \\
    f1    &  0.276303 &   1.0 \\
    auc   &  1.000000 &   1.0 \\
    aps   &  0.102470 &   1.0 \\
    kappa &  1.000000 &   1.0 \\
    \end{tabular}
    \caption{\label{tab:results} (a) The median and interquartile ranges across 27 validation participants of each classificaiton model performance metric obtained from 3 classifiers trained either on topological features determine by DR, Rips, or Alpha filtrations. (b) The Mood's median test p-values after comparing DR model performance metrics against Rips and Alpha model performance metrics.}
    \end{center}
    \end{table}
    
    The median and interquartile ranges of each model performance metric across the validation participants as well as the p-values of Mood's median test of are shown in Table \ref{tab:results}. We observe very similar performance metrics for all three filtration methods, with either DR, Alpha, or Rips exhibiting the best median performance, depending on the metric. For Mood's median test, a small p-value indicates that there is a notable difference between the medians of the corresponding performance metric between the two classification models. Notice that the p-values in Table \ref{tab:results} are all very large, with the exception of the average precision score (aps) when comparing Rips vs DR. This suggests that, given the spread of model performance on different validation participants, we do not have enough evidence to reject the null hypothesis that the median performance metrics are the same.

Our data analysis code is well-documented for reproducibility in our online repository \cite{Amish_Mishra_ML-Del-Rips-sleep-wake-classification}.

\section{Conclusion}
    In this paper, we have defined and implemented the Delaunary-Rips (DR) filtration for point cloud data, compared its computational efficiency in practice to other standard filtrations built on point cloud data, characterized some of its stability properties, and demonstrated its application on various datasets in an ML pipeline. We proved that under sufficiently small perturbations of the locations of points in the cloud, the persistence diagrams vary continuously with the data, provided the underlying simplicial complex on the data set remains fixed. This was done by bounding from above the bottleneck distance between the persistence diagrams by twice the Hausdorff distance between the original data and its perturbed counterpart. In the case when the Delaunay triangulation changes as a result of a larger perturbation, we examined the instability of the DR filtration by carefully proving a discontinuity of the map between the data metric space and the diagram metric space. As far as we know, this result is a first of its kind to use the standard reduction algorithm to compute persistence diagrams on symbolic variables in service of a formal proof. Since the theoretical condition on stability may not always be met in practical applications we investigated whether the instability in the diagrams generated using DR poses a significant problem when used in an ML pipeline and found that it need not. 
    
    There are several limitations of this study and avenues of further inquiry. For one, we expect the relative performance of each filtration method used to derive topological features for an ML modelling task to be problem specific. Thus we cannot be certain the insensitivity of model performance to filtration method, and the comparable performance of DR to Rips and Alpha we observed will generalize to other contexts. 
    
    Our stability results were simplified by insisting we maintain the underlying Delaunay triangulation to keep the space of simplices fixed. While our results in section \ref{instability_section} indicate that, in general, a change in the underlying Delaunay triangulation can cause a discontinuity in the transformation sending data to diagram, a more precise characterization of the degree of the discontinuities is not known, and so we are limited to an empirical evaluation.
    
    In section \ref{sec:runtimes}, we compared a Python implementation of DR with other filtration methods. Implementation details including the choice of language may have implications for the relative performance gains of DR over other filtration methods. With a C/C++ implementation of DR, how does the runtime of computing persistence diagrams based on DR compare with runtimes of computing persistence diagrams using other complexes (Cech, Witness, etc.)? Along the same lines, how does our implementation (or a C/C++ implementation) of DR compare in runtime with implementations of other TDA methods in other software packages (e.g., Javaplex \cite{Javaplex}, Perseus \cite{perseus}, Dionysus \cite{dionysus}, Dipha \cite{dipha}, Gudhi \cite{gudhi})? Finally, during the drafting of this paper, an updated version of the software package Ripser was released named Ripser++ \cite{ripserplusplus}. The new approach makes use of GPU acceleration to parallelize processing simplices and finding persistence pairs. How does the runtime of DR for persistent diagram computation compare with that of Ripser++? Is there a way to harness parallelization to computationally benefit DR? Such systematic comparisons are outside the scope of this work.

\printbibliography[
heading=bibintoc,
title={Whole bibliography}
] 

@article{Roadmap,
  author =       {Nina Otter and Mason A Porter and Ulrike Tillmann and Peter Grindrod and Heather A Harrington},
  title =        "A roadmap for the computation of persistent homology",
  journal =      "EPJ Data Science",
  volume =       "6",
  number =       "17",
  year =         "2017",
  DOI =          "10.1140/epjds/s13688-017-0109-5",
}

@book{Edelsbrunner,
author = {Edelsbrunner, Herbert and Harer, John},
year = {2010},
month = {01},
pages = {},
title = {Computational Topology: An Introduction},
isbn = {978-0-8218-4925-5},
doi = {10.1007/978-3-540-33259-6_7}
}

@article{Bauer_2016,
   title={The Morse theory of Čech and Delaunay complexes},
   volume={369},
   ISSN={1088-6850},
   url={http://dx.doi.org/10.1090/tran/6991},
   DOI={10.1090/tran/6991},
   number={5},
   journal={Transactions of the American Mathematical Society},
   publisher={American Mathematical Society (AMS)},
   author={Bauer, Ulrich and Edelsbrunner, Herbert},
   year={2016},
   month={12},
   pages={3741–3762}
}

@article{Chung_frontiers_hr_ml,
  
AUTHOR={Chung, Yu-Min and Hu, Chuan-Shen and Lo, Yu-Lun and Wu, Hau-Tieng},   
	 
TITLE={A Persistent Homology Approach to Heart Rate Variability Analysis With an Application to Sleep-Wake Classification},      
	
JOURNAL={Frontiers in Physiology},      
	
VOLUME={12},      
	
YEAR={2021},      
	  
URL={https://www.frontiersin.org/article/10.3389/fphys.2021.637684},       
	
DOI={10.3389/fphys.2021.637684},      
	
ISSN={1664-042X},   
   
ABSTRACT={Persistent homology is a recently developed theory in the field of algebraic topology to study shapes of datasets. It is an effective data analysis tool that is robust to noise and has been widely applied. We demonstrate a general pipeline to apply persistent homology to study time series, particularly the instantaneous heart rate time series for the heart rate variability (HRV) analysis. The first step is capturing the shapes of time series from two different aspects—the persistent homologies and hence persistence diagrams of its sub-level set and Taken's lag map. Second, we propose a systematic and computationally efficient approach to summarize persistence diagrams, which we coined persistence statistics. To demonstrate our proposed method, we apply these tools to the HRV analysis and the sleep-wake, REM-NREM (rapid eyeball movement and non rapid eyeball movement) and sleep-REM-NREM classification problems. The proposed algorithm is evaluated on three different datasets via the cross-database validation scheme. The performance of our approach is better than the state-of-the-art algorithms, and the result is consistent throughout different datasets.}
}

@article{scikit-learn,
 title={Scikit-learn: Machine Learning in {P}ython},
 author={Pedregosa, F. and Varoquaux, G. and Gramfort, A. and Michel, V.
         and Thirion, B. and Grisel, O. and Blondel, M. and Prettenhofer, P.
         and Weiss, R. and Dubourg, V. and Vanderplas, J. and Passos, A. and
         Cournapeau, D. and Brucher, M. and Perrot, M. and Duchesnay, E.},
 journal={Journal of Machine Learning Research},
 volume={12},
 pages={2825--2830},
 year={2011}
}

@inproceedings{Weller1997StabilityOV,
  title={Stability of voronoi neighborship under perturbations of the sites},
  author={Frank-Uwe Weller},
  booktitle={CCCG},
  year={1997}
}

@InProceedings{Takens,
author="Takens, Floris",
editor="Rand, David
and Young, Lai-Sang",
title="Detecting strange attractors in turbulence",
booktitle="Dynamical Systems and Turbulence, Warwick 1980",
year="1981",
publisher="Springer Berlin Heidelberg",
address="Berlin, Heidelberg",
pages="366--381",
isbn="978-3-540-38945-3"
}

@INPROCEEDINGS{3d_motion_capture,
  author={Venkataraman, Vinay and Ramamurthy, Karthikeyan Natesan and Turaga, Pavan},
  booktitle={2016 IEEE International Conference on Image Processing (ICIP)}, 
  title={Persistent homology of attractors for action recognition}, 
  year={2016},
  volume={},
  number={},
  pages={4150-4154},
  doi={10.1109/ICIP.2016.7533141}}

@misc{ripserplusplus,
Author = {Simon Zhang, Mengbai Xiao and Hao Wang},
Title = {GPU-Accelerated Computation of Vietoris-Rips Persistence Barcodes},
Year = {2020},
Eprint = {arXiv:2003.07989},
}

@article{chintakunta2015entropy,
  title={An entropy-based persistence barcode},
  author={Chintakunta, Harish and Gentimis, Thanos and Gonzalez-Diaz, Rocio and Jimenez, Maria-Jose and Krim, Hamid},
  journal={Pattern Recognition},
  volume={48},
  number={2},
  pages={391--401},
  year={2015},
  publisher={Elsevier}
}

@book{hatcher2002algebraic,
  title={Algebraic Topology},
  author={Hatcher, A.},
  isbn={9780521795401},
  lccn={00065166},
  series={Algebraic Topology},
  url={https://books.google.com/books?id=BjKs86kosqgC},
  year={2002},
  publisher={Cambridge University Press}
}

@article{leibon2008topological,
  title={Topological structures in the equities market network},
  author={Leibon, Gregory and Pauls, Scott and Rockmore, Daniel and Savell, Robert},
  journal={Proceedings of the National Academy of Sciences},
  volume={105},
  number={52},
  pages={20589--20594},
  year={2008},
  publisher={National Acad Sciences}
}

@inproceedings{chung2009persistence,
  title={Persistence diagrams of cortical surface data},
  author={Chung, Moo K and Bubenik, Peter and Kim, Peter T},
  booktitle={International Conference on Information Processing in Medical Imaging},
  pages={386--397},
  year={2009},
  organization={Springer}
}

@article{QAISER2016119,
title = {Persistent Homology for Fast Tumor Segmentation in Whole Slide Histology Images},
journal = {Procedia Computer Science},
volume = {90},
pages = {119-124},
year = {2016},
note = {20th Conference on Medical Image Understanding and Analysis (MIUA 2016)},
issn = {1877-0509},
doi = {10.1016/j.procs.2016.07.033},
url = {https://www.sciencedirect.com/science/article/pii/S1877050916312133},
author = {Talha Qaiser and Korsuk Sirinukunwattana and Kazuaki Nakane and Yee-Wah Tsang and David Epstein and Nasir Rajpoot},
keywords = {Digital Pathology, Tumor Segmentation, Histology Image Analysis, Persistent Homology, Colorectal Cancer},
abstract = {Automated tumor segmentation in Hematoxylin & Eosin stained histology images is an essential step towards a computer-aided diagnosis system. In this work we propose a novel tumor segmentation approach for a histology whole-slide image (WSI) by exploring the degree of connectivity among nuclei using the novel idea of persistent homology profiles. Our approach is based on 3 steps: 1) selection of exemplar patches from the training dataset using convolutional neural networks (CNNs); 2) construction of persistent homology profiles based on topological features; 3) classification using variant of k-nearest neighbors (k-NN). Extensive experimental results favor our algorithm over a conventional CNN.}
}

@article{kramar2014quantifying,
  title={Quantifying force networks in particulate systems},
  author={Kram{\'a}r, Miroslav and Goullet, Arnaud and Kondic, Lou and Mischaikow, Konstantin},
  journal={Physica D: Nonlinear Phenomena},
  volume={283},
  pages={37--55},
  year={2014},
  publisher={Elsevier}
}

@InProceedings{10.1007/978-3-319-64185-0_11,
author="Asaad, Aras
and Jassim, Sabah",
editor="Kraetzer, Christian
and Shi, Yun-Qing
and Dittmann, Jana
and Kim, Hyoung Joong",
title="Topological Data Analysis for Image Tampering Detection",
booktitle="Digital Forensics and Watermarking",
year="2017",
publisher="Springer International Publishing",
address="Cham",
pages="136--146",
abstract="This paper introduces a topological approach to detection of image tampering for forensics purposes. This is based on the emerging Topological Data Analysis (TDA) concept of persistent homological invariants associated with certain image features. Image features of interest are pixels that have a uniform Local Binary pattern (LBP) code representing texture feature descriptors. We construct the sequence of simplicial complexes for increasing sequence of distance thresholds whose vertices are the selected set of pixels, and calculate the corresponding non-increasing sequence of homology invariants (number of connected components). The persistent homology of this construction describes the speed with which the sequence terminates, and our tamper detection scheme exploit its sensitivity to image tampering/degradation. We test the performance of this approach on a sufficiently large image dataset from a benchmark dataset of passport photos, and show that the persistent homology sequence defines a discriminating criterion for the morphing attacks (i.e. distinguishing morphed images from genuine ones).",
isbn="978-3-319-64185-0"
}

@article{bauer2021ripser,
  title={Ripser: efficient computation of Vietoris--Rips persistence barcodes},
  author={Bauer, Ulrich},
  journal={Journal of Applied and Computational Topology},
  volume={5},
  number={3},
  pages={391--423},
  year={2021},
  publisher={Springer}
}

@article{Čufar2020, doi = {10.21105/joss.02614}, url = {https://doi.org/10.21105/joss.02614}, year = {2020}, publisher = {The Open Journal}, volume = {5}, number = {54}, pages = {2614}, author = {Matija Čufar}, title = {Ripserer.jl: flexible and efficient persistent homology computation in Julia}, journal = {Journal of Open Source Software} }

@article{Edelsbrunner1993TheUO,
  title={The union of balls and its dual shape},
  author={Herbert Edelsbrunner},
  journal={Discrete \& Computational Geometry},
  year={1993},
  volume={13},
  pages={415-440}
}

@inproceedings{sheehy2012linear,
  title={Linear-size approximations to the Vietoris-Rips filtration},
  author={Sheehy, Donald R},
  booktitle={Proceedings of the twenty-eighth annual symposium on Computational geometry},
  pages={239--248},
  year={2012}
}

@article{Guibas2007ReconstructionUW,
  title={Reconstruction Using Witness Complexes},
  author={Leonidas J. Guibas and Steve Oudot},
  journal={Discrete \& Computational Geometry},
  year={2007},
  volume={40},
  pages={325-356}
}

@inproceedings{Silva2004TopologicalEU,
  title={Topological estimation using witness complexes},
  author={Vin de Silva and Gunnar E. Carlsson},
  booktitle={Symposium on Point Based Graphics},
  year={2004}
}

@article{de2008weak,
  title={A weak characterisation of the Delaunay triangulation},
  author={Vin De Silva},
  journal={Geometriae Dedicata},
  volume={135},
  number={1},
  pages={39--64},
  year={2008},
  publisher={Springer}
}

@article{ctralie2018ripser,
  doi = {10.21105/joss.00925},
  url = {https://doi.org/10.21105/joss.00925},
  year  = {2018},
  month = {9},
  publisher = {The Open Journal},
  volume = {3},
  number = {29},
  pages = {925},
  author = {Christopher Tralie and Nathaniel Saul and Rann Bar-On},
  title = {{Ripser.py}: A Lean Persistent Homology Library for Python},
  journal = {The Journal of Open Source Software}
}

@software{Amish_Mishra_Delaunay-Rips,
author = {Amish Mishra},
title = {{Delaunay-Rips}},
url = {https://github.com/amish-mishra/cechmate-DR}
}

@software{synthetic_shape_classification,
author = {Amish Mishra},
title = {Classification of synthetic data into shape classes},
url = {https://github.com/amish-mishra/TDA_shape_classification_using_DR}
}

@software{perseus,
author = {Vidit Nanda},
title = {Perseus, the Persistent Homology Software},
url = {http://www.sas.upenn.edu/~vnanda/perseus},
note = {Accessed 03/01/2023}}

@software{dionysus,
author = {Dmitriy Morozov},
title = {Dionysus},
url = {https://pypi.org/project/dionysus/}}

@software{dipha,
author = {Jan Reininghaus},
title = {DIPHA (A Distributed Persistent Homology Algorithm)},
url = {https://github.com/DIPHA/dipha}}

@inproceedings{gudhi,
author = {Maria, Clément and Boissonnat, Jean-Daniel and Glisse, Marc and Yvinec, Mariette},
year = {2014},
month = {06},
pages = {},
title = {The Gudhi Library: Simplicial Complexes and Persistent Homology},
isbn = {978-3-662-44198-5},
doi = {10.1007/978-3-662-44199-2_28}
}

@software{Amish_Mishra_ML-Del-Rips-sleep-wake-classification,
author = {Amish Mishra},
title = {{ML-Del-Rips-sleep-wake-classification}},
url = {https://github.com/amish-mishra/ML-Del-Rips-sleep-wake-classification}
}

@software{cechmate,
author = {Chris Tralie and Nathaniel Saul},
title = {{Cechmate}},
url = {https://github.com/scikit-tda/cechmate},
note    = {License: MIT}
}

@article{BAUER201776,
title = {Phat – Persistent Homology Algorithms Toolbox},
journal = {Journal of Symbolic Computation},
volume = {78},
pages = {76-90},
year = {2017},
note = {Algorithms and Software for Computational Topology},
issn = {0747-7171},
doi = {10.1016/j.jsc.2016.03.008},
url = {https://www.sciencedirect.com/science/article/pii/S0747717116300098},
author = {Ulrich Bauer and Michael Kerber and Jan Reininghaus and Hubert Wagner},
keywords = {Persistent homology, Topological data analysis, Matrix reduction, Algorithm engineering},
abstract = {Phat is an open-source C++ library for the computation of persistent homology by matrix reduction, targeted towards developers of software for topological data analysis. We aim for a simple generic design that decouples algorithms from data structures without sacrificing efficiency or user-friendliness. We provide numerous different reduction strategies as well as data types to store and manipulate the boundary matrix. We compare the different combinations through extensive experimental evaluation and identify optimization techniques that work well in practical situations. We also compare our software with various other publicly available libraries for persistent homology.}
}

@software{persim,
author = {Nathaniel Saul and Chris Tralie and Francis Motta and Michael Catanzaro and Gabrielle Angeloro and Calder Sheagren},
title = {{Persim}},
url = {https://persim.scikit-tda.org/en/latest/},
note    = {License: MIT}
}

@article{ADAMASZEK20171,
title = {Random cyclic dynamical systems},
journal = {Advances in Applied Mathematics},
volume = {83},
pages = {1-23},
year = {2017},
issn = {0196-8858},
doi = {10.1016/j.aam.2016.08.007},
url = {https://www.sciencedirect.com/science/article/pii/S0196885816300768},
author = {Michał Adamaszek and Henry Adams and Francis Motta},
keywords = {Discrete dynamical systems, Geometric probability, Catalan numbers, Vietoris–Rips complexes},
abstract = {For X a finite subset of the circle and for 0<r≤1 fixed, consider the function fr:X→X which maps each point to the clockwise furthest element of X within angular distance less than 2πr. We study the discrete dynamical system on X generated by fr, and especially its expected behavior when X is a large random set. We show that, as |X|→∞, the expected fraction of periodic points of fr tends to 0 if r is irrational and to 1q if r=pq is rational with p and q coprime. These results are obtained via more refined statistics of fr which we compute explicitly in terms of (generalized) Catalan numbers. The motivation for studying fr comes from Vietoris–Rips complexes, a geometric construction used in computational topology. Our results determine how much one can expect to simplify the Vietoris–Rips complex of a random sample of the circle by removing dominated vertices.}
}

@incollection{gudhi:AlphaComplex
, author    = {Vincent Rouvreau}
, title     = {Alpha complex}
, publisher = {GUDHI Editorial Board}
, edition   = {3.7.1}
, booktitle = {GUDHI User and Reference Manual}
, url       = {https://gudhi.inria.fr/doc/3.7.1/group__alpha__complex.html}
, year      = {2023}
}

@article{Chazal_gromov_stabililty,
author = {Chazal, Frédéric and Cohen-Steiner, David and Guibas, Leonidas J. and Mémoli, Facundo and Oudot, Steve Y.},
title = {Gromov-Hausdorff Stable Signatures for Shapes using Persistence},
journal = {Computer Graphics Forum},
volume = {28},
number = {5},
pages = {1393-1403},
keywords = {I.3.5 Computer Graphics: Computational Geometry and Object Modelling—},
doi = {10.1111/j.1467-8659.2009.01516.x},
url = {https://onlinelibrary.wiley.com/doi/abs/10.1111/j.1467-8659.2009.01516.x},
eprint = {https://onlinelibrary.wiley.com/doi/pdf/10.1111/j.1467-8659.2009.01516.x},
abstract = {Abstract We introduce a family of signatures for finite metric spaces, possibly endowed with real valued functions, based on the persistence diagrams of suitable filtrations built on top of these spaces. We prove the stability of our signatures under Gromov-Hausdorff perturbations of the spaces. We also extend these results to metric spaces equipped with measures. Our signatures are well-suited for the study of unstructured point cloud data, which we illustrate through an application in shape classification.},
year = {2009}
}

@article{CohenSteiner2005StabilityOP,
  title={Stability of Persistence Diagrams},
  author={David Cohen-Steiner and Herbert Edelsbrunner and John Harer},
  journal={Discrete \& Computational Geometry},
  year={2005},
  volume={37},
  pages={103-120}
}

@article{adams2017persistence,
  title={Persistence images: A stable vector representation of persistent homology},
  author={Adams, Henry and Emerson, Tegan and Kirby, Michael and Neville, Rachel and Peterson, Chris and Shipman, Patrick and Chepushtanova, Sofya and Hanson, Eric and Motta, Francis and Ziegelmeier, Lori},
  journal={Journal of Machine Learning Research},
  volume={18},
  year={2017}
}

@article{boissonnat2013stability,
  title={The stability of Delaunay triangulations},
  author={Boissonnat, Jean-Daniel and Dyer, Ramsay and Ghosh, Arijit},
  journal={International Journal of Computational Geometry \& Applications},
  volume={23},
  number={04n05},
  pages={303--333},
  year={2013},
  publisher={World Scientific}
}

@article{Skraba2020WassersteinSF,
  title={Wasserstein Stability for Persistence Diagrams},
  author={Primoz Skraba and Katharine Turner},
  journal={arXiv: Algebraic Topology},
  year={2020}
}

@article{lc-reduction,
author = {Matoušek, Jiří},
year = {2008},
month = {01},
pages = {},
title = {LC reductions yield isomorphic simplicial complexes},
volume = {3},
journal = {Contributions to Discrete Mathematics [electronic only]}
}

@inproceedings{ribeiro2016should,
  title={" Why should i trust you?" Explaining the predictions of any classifier},
  author={Ribeiro, Marco Tulio and Singh, Sameer and Guestrin, Carlos},
  booktitle={Proceedings of the 22nd ACM SIGKDD international conference on knowledge discovery and data mining},
  pages={1135--1144},
  year={2016}
}

@article{Krzywinski2017PointsOS,
  title={Points of Significance: Classification and regression trees},
  author={Martin Krzywinski and Naomi Altman},
  journal={Nature Methods},
  year={2017},
  volume={14},
  pages={757-758}
}

@article{malik2018sleep,
  title={Sleep-wake classification via quantifying heart rate variability by convolutional neural network},
  author={Malik, John and Lo, Yu-Lun and Wu, Hau-tieng},
  journal={Physiological measurement},
  volume={39},
  number={8},
  pages={085004},
  year={2018},
  publisher={IOP Publishing}
}

@article{Xu2001AutomaticDO,
  title={Automatic detection of artifacts in heart period data.},
  author={Xirong Xu and Stephanie Schuckers},
  journal={Journal of electrocardiology},
  year={2001},
  volume={34 Suppl},
  pages={
          205-10
        }
}

@ARTICLE{2020SciPy-NMeth,
  author  = {Virtanen, Pauli and Gommers, Ralf and Oliphant, Travis E. and
            Haberland, Matt and Reddy, Tyler and Cournapeau, David and
            Burovski, Evgeni and Peterson, Pearu and Weckesser, Warren and
            Bright, Jonathan and {van der Walt}, St{\'e}fan J. and
            Brett, Matthew and Wilson, Joshua and Millman, K. Jarrod and
            Mayorov, Nikolay and Nelson, Andrew R. J. and Jones, Eric and
            Kern, Robert and Larson, Eric and Carey, C J and
            Polat, {\.I}lhan and Feng, Yu and Moore, Eric W. and
            {VanderPlas}, Jake and Laxalde, Denis and Perktold, Josef and
            Cimrman, Robert and Henriksen, Ian and Quintero, E. A. and
            Harris, Charles R. and Archibald, Anne M. and
            Ribeiro, Ant{\^o}nio H. and Pedregosa, Fabian and
            {van Mulbregt}, Paul and {SciPy 1.0 Contributors}},
  title   = {{{SciPy} 1.0: Fundamental Algorithms for Scientific
            Computing in Python}},
  journal = {Nature Methods},
  year    = {2020},
  volume  = {17},
  pages   = {261--272},
  adsurl  = {https://rdcu.be/b08Wh},
  doi     = {10.1038/s41592-019-0686-2},
}

@inproceedings{Javaplex,
     author = {Tausz, Andrew and Vejdemo-Johansson, Mikael and Adams, Henry},
     title = {Java{P}lex: {A} research software package for persistent (co)homology},
     booktitle = {Proceedings of ICMS 2014},
     editor = {Hong, Han and Yap, Chee},
     series = {Lecture Notes in Computer Science 8592}, 
     year = {2014},
     pages = {129-136},
     note = {Software available at \url{http://appliedtopology.github.io/javaplex/}}
}


\end{document}